\newcommand{\ang}{\mathfrak{a}}
\newcommand{\dae}{\mathfrak{d}}
\newcommand{\ecomodel}{\mathfrak{M}}
\newcommand{\pertur}{\mathcal{S}}
\newcommand{\angdae}{$\mathfrak{a}/\mathfrak{d}$\xspace}
\newcommand{\agent}{\mathfrak{p}}
\newcommand{\stress}{\textsf{strength}}
\newcommand{\expect}{\mathbb{E}}
\newcommand{\PNE}{\hbox{\textsc{PNE}}\xspace}
\newcommand{\NE}{\hbox{\textsc{NE}}\xspace}
\newcommand{\DSE}{\hbox{\textsc{DSE}}\xspace}
\newcommand{\remove}[1]{}
\def\newblock{}%\hskip .11em plus .33em minus .07em}
\newcommand{\be}{\begin{eqnarray}}
\newcommand{\ee}{\end{eqnarray}}
\renewcommand{\arraystretch}{1.2}
\newtheorem{lemma}{Lemma}
\newtheorem{theorem}[lemma]{Theorem}
\newtheorem{property}[lemma]{Property}
\newtheorem{definition}{Definition}
\newtheorem{ex}{Example}[section]
\title{Uncertainty  Analysis  of Simple  Macroeconomic Models\\ Using Angel-Daemon Games}
\author{Joaquim  Gabarro \and Maria Serna}
\date{}
\begin{document}%
%%%%%%%%%%%%%%%%%

\maketitle

\vspace{-1cm}
\begin{center}
{Computer Science  Dept. \\
ALBCOM Research Group,\\
Universitat Polit\`ecnica de Catalunya, BarcelonaTech,  
%Edifici $\Omega$, Campus Nord, 
%Jordi Girona, 1-3,
%Barcelona 08034, 
Barcelona, Spain\\
E-mail: gabarro,mjserna@cs.upc.edu
}
\end{center}

\begin{abstract}
We propose the use of   an angel-daemon framework to perform an  uncertainty analysis of  short-term macroeconomic models  with exogenous components.  
An uncertainty profile $\mathcal U$ is  a short and macroscopic description of a potentially perturbed situation. The angel-daemon framework  uses  $\mathcal U$  to define a strategic game
where two agents, the angel and the daemon, act selfishly having different  goals. The  Nash equilibria of those games provide the stable strategies in  perturbed  situations, giving a natural estimation of uncertainty.
 In this initial work we apply the framework in order to get an 
 uncertainty  analysis of linear versions of the IS-LM and the IS-MP models. In those models, by considering uncertainty profiles, we can capture different economical situations.
Some of them  can be described in terms of macroeconomic policy coordination. In other cases we just  analyse the results of the system under some possible perturbation level.  
Besides providing examples of application  we  analyse the structure of the Nash equilibria in some particular cases of interest. 
\end{abstract}

\noindent
\textbf{Keywords:}{Uncertainty profiles; strategic games; zero-sum games; angel-daemon games; IS-LM model; IS-MP model}

%\section{Introduction}
%\label{sec-introduction}

%%%%%%%%% Intro IS-LM %%%%%%%%%%

\section{Introduction}
\label{Easy IS-LM Model}

The distinction between \emph{risk} and \emph{uncertainty} has become increasingly  important  since 
\citep{Knight} discussed it as we have imperfect knowledge of future events in our ever-changing world. 
%made a distinction between \emph{risk} and \emph{uncertainty}.  
Informally, \emph{risk} can be measured by probabilities. In contrast, \emph{uncertainty} refers to something where we cannot even gather   the information required to figure out probabilities.
However, in practice there is no difference between 
risk and uncertainty in empirical analysis on the economy and 
financial markets. Both  are   measured by 
historical standard deviation of the variable of interest \citep{HullRisk,ArgimiroBook}. 
This paper proposes an alternative to disentangle these 
seemingly indistinguishable concepts applying ideas from game theory and computer science.

%\bigskip
The study of web applications is a field where uncertainty becomes  unavoidable.
The angel-daemon  framework \citep{GSS2014-CJ} provides a way to obtain  numerical estimates of uncertainty  in the execution of a Web service. In such a setting, the uncertainty  is  captured by an uncertainty profile describing a  stressed environment for the execution of the Web application. Uncertainty profiles provide a description of the perceived  uncertain behaviour with respect to possible failing services or execution delays.   
That is, some sites can potentially misbehave but we are uncertain about the specific sites that will do so. 
The model attempts to  balance positive and negative aspects. Considering  only positive aspects (minimizing damage) is usually  too optimistic. In the opposite side, being pessimistic (maximizing damage) is also  not realistic. Reality often evolves in between optimism and pessimism. To model this situation, the framework considers two agents : \emph{the angel}  ($\ang$),  dealing  with the optimistic side; and  the \emph{daemon} ($\dae$), dealing with the pessimistic side. These agents  act strategically in an associated angel-daemon zero-sum game.
In this context, uncertain situations are identified with the  Nash equilibria of the angel-daemon game and they are assessed by the value of the game.  
%Thus offering  the possibility of analysing the system at a point in between the best and the worst possible behaviours. 
It is important to emphasize that the results in  \citep{GSS2014-CJ}  are useful to analyse  uncertain stable (or timeless) environments. Thus, the framework analyses uncertainty in the short-term and it  is not useful for a long-term analysis.

In this paper, we present an angel-daemon ($\ang/\dae$) framework to model 
uncertainty  in short-term macroeconomic models.
In the $\ang/\dae$ approach the actions undertaken by $\ang$ and $\dae$ usually go into different  directions and  can affect 
the result  of underlying systems in unexpected ways. 
In some cases, $\ang$ or $\dae$ may be identified with policy makers or institutions. In other cases,  they may describe a situation created by many interacting agents.
We have to be careful about some facts. 
%
%Previous, to do that we need to be careful about  two facts.
First, it is usually mostly difficult to  picture a policy-maker or an institution acting deliberately as $\dae$. 
Second, the distinction $\ang/\dae$ at first sight seems unsuitable when it comes to market
participants, as there is not an obvious pessimistic party opposed 
to an openly optimistic party. Despite the judging names, it is not our intention to associate any moral connotation to them. However,  there are actors that might be thought of as performing the role of $\ang$  or the  $\dae$, even with moral issues \citep{AkeShi,PhishingPools}. 
 We clarify our approach using  an example considering  the role of  the Fed and Wall Street  in the Great Recession in terms of $\ang/\dae$.
%It is tempting to identify the $\ang$ or $\dae$ with such institutions. 
%
The Fed, as policy-maker, who tried to improve the economic behaviour,  is identified with $\ang$. Wall Street (representing other economic agents) used this policy strategically for their own interest and we identify  them with  $\dae$.
However,  Wall Street ($\dae$) interacted with Fed ($\ang$)  using a policy in an  unexpected way \citep{GreenspanNeverSaw}.
%
%In other words, we cannot believe that the ``goal'' of Wall Street was to create  the Great Recession. Perhaps 
It is reasonable to assume that  the Great Recession appears as a consequence of the a priori broadly unexpected and strategic interplay between $\ang$ and $\dae$ \citep{LetterQueen}.
%
%Thus, by considering uncertainty profiles, we can model different economical situations. 
%In some cases those situations can be described in terms of macroeconomic policy coordination. However there are other cases where we just  analyse the results of the system under some possible perturbation.  

Our approach may provide another  interpretation of the Keynesian animal spirits  \citep{KeyEmp}.  As Keynes pointed out,    
our decisions to do something are mostly the result of  a ''spontaneous urge to act" (animal spirits).
Thus, in  these models,  
animal spirits ($\ang$ and $\dae$) can be thought as players in a strategic game. 
The characters representing $\ang$ and $\dae$ are summarized into
the utilities (or dis-utilities) $u_{\ang}$ and $u_\dae$.  Giving different values to $u_\ang$ and $u_\dae$
it is possible to shape different situations.
For instance, if we set $u_{\ang}=Y$ and $u_{\dae}= T$, the angel tries to maximize the income (in many cases considered as a good issue) and the daemon tries to maximize taxes (considered a bad issue in some schools).  By setting  $u_\ang =-T$ and $u_\dae = r$, the angel tries to maximize $-T$ (minimizing $T$) and $\dae$ tries to maximize the interest rate.  

In order to  link the $\ang/\dae$ framework with short-term macroeconomic models, we identify the parts of the macroeconomic system to be perturbed.
We  assume that $\ang$ and $\dae$ can exert some power in the value of  some of the exogenous components of the model.  We model the power of  action by potential perturbations that $\ang$ and $\dae$ might apply  to modify the components'  estimation.
The perturbation values are real numbers, so they can be either positive or negative. 
%Those agents are expected to have opposite goals, $\ang$ wants to maximize some aspect  of the system  while $\dae$ tries to minimize  the same or another aspects.    
We also assume that both agents have limits in the influence they can exert.  This gives rise to an adequate redefinition of uncertainty profiles describing the component's variability in an perturbed situation.  As in \citep{GSS2014-CJ}, once the uncertainty profile is defined, we analyse the stable situations of the corresponding  \angdae-game.  
The obtained \angdae games have a richer structure than  the initial application as  the potential strategic situation cannot always be model by a zero-sum game. Nevertheless,  we can define some interesting valuations that can be analysed through zero-sum \angdae games.

To test the applicability  of the $\ang/\dae$ framework,  we start with linear approximation of extensively studied models. In particular, we develop $\ang/\dae$ analysis of  the InvestmentSavings-LiquidityMoney (IS-LM) introduced by \citep{Hicks,Hicks80-81} and  the InvestmentSavings-MonetaryPolicy  (IS-MP) developed by \citep{DRomer2000}. 
We have selected those models as they provide different interpretations of the monetary policy, differentiating periods  without inflation from those  with inflation.  
Besides  defining the framework we model
uncertainty of the fiscal policy in the IS-LM model and of the external shocks in the IS-MP model. 
We either obtain the Nash equilibria or we analyse the properties of Nash equilibria.
In particular,  we show that if we are  uncertain of the fiscal policy, there is always a  a dominant strategy equilibria.

The paper is structured as follows. 
In Section  \ref{sec:bmemod} we briefly describe the IS-LM and the IS-MP models. Section  \ref{sec:laEG} is devoted to  the formulation of the linear approximations to the IS-LM and  the IS-MP models describing their exogenous components. In Section \ref{Stress}, we provide a model for the possible perturbations  of the set of exogenous components,  the so called \emph{perturbation strength model}.  Section \ref{Uncertainty} introduces uncertainty profiles and the associated \angdae games tailored  to the linear  IS-LM and IS-MP models and analyses the Nash equilibria of  some cases. Section \ref{section-IS-LM} studies the IS-LM model when we are uncertain of the fiscal policy and Section \ref{section-IS-MP} studies the IS-MP model when we are uncertain of the external shocks. 
Finally, in Section \ref{Conclusions}, we raise some concluding remarks and discuss some future research.

%%%%%%%%%%%%%%%%%%%%
%%%%%%%%  section basic macroeconomic models %%%%%%
%%%%%%%%%%%%%%%%%%

\section{Basic Macroeconomic Models}
\label{sec:bmemod}
In order to make this paper self contained and accessible to a wide audience, we describe briefly 
the two basic Macroeconomic models considered in this paper.
 The IS-LM which  is useful to describe periods with no inflation  and the IS-MP in which the inflation is taken explicitly into account by the  monetary policy fixed by the Central Bank.
Recall that, according to Fisher's equation, the relation between the real interest rate $r$, the nominal interest rate $i$ and the inflation $\pi$ is 
$r=i-\expect(\pi)$ where $\expect(\pi)$ is the inflationary expectations. When there is no inflation, the real and nominal interest rates coincide.  

\subsection{IS-LM model}
The 
InvestmentSavings-LiquidityMoney model (IS-LM) \citep{Hicks} provides  a way to express, in equilibrium, the national income and the interest rate as a a function of several exogenous components. 
The IS-LM model describes approximately the monetary market when the gold standard was the norm because the gold can be reasonably represented by the 
money supply $M$.  Despite its simplicity,  this model continues making useful predictions \citep{IS-LMentary}.  
As we have mention before,  when there is no inflation, the nominal and the real interest rate coincide, so only $r$ appears in the equations.  This situation of no inflation might happen for long periods,
the  important fact to remember  is that inflation is mostly  a twentieth-century
phenomenon. Up to World War I, inflation was zero or close to it \citep{PikettyCapital}. 
So, for some time, national income $Y$ and interest rate $r$  were the two main macroscopic variables. 
The IS-LM is described by two equations.

\begin{itemize}
\item
The IS line $Y= C(Y-T)+ I(r) + G$ represents a continuum of equilibria in 
the goods market.   
On it, $Y$ is the national income,  $r$ is the interest rate.  The remaining components are the sum of the annual rates of spending by:  the consumers  (as a function of the disposable income) $C(Y-T)$; the investors (as a function of the interest rate) $I(r)$ and  the government $G$. 
\item
The  LM line $M/P=L(r,Y)$ is interpreted as continuum of equilibrium in the money market.
The money supply is $M/P$, where  $M$ is the money  and $P$ is the price level.  The 
liquidity preference $L(r,Y)$ is a function of the national income and the interest rate.  
\end{itemize}
An equilibrium point $(Y,r)$ is a solution of the system of equations. These equilibria  correspond to the  points where both markets are at mutual equilibrium. 
%

%--- intro IS-MP

\subsection{IS-MP Model}
In the thirties, the world was in transition from the gold standard  and the monetary policy of the  central banks changed  to deal with inflation.
In calm periods, central banks  ensure that the money supply grows as
economic activity in order  to guarantee a low inflation rate of 1 or 2 percent a year. The Central Bank 
creates new money by lending to banks for very short periods 
\citep{PikettyCapital}.
In Europe, the primary objective of the monetary policy of the European Central Bank is to maintain price stability as a way to contribute to economic growth and job creation. In the pursuit of price stability, the European Central Bank aims at maintaining inflation rates below, but close to, 2\% over the medium term \citep{ECB-MonetaryPolicy}.
%\noindent
The InvestementSaving-MonetaryPolicy model (IS-MP) \citep{DRomer2000} considers this new reality about monetary policy and inflation. The MP equation deals with the interest rate as a function of the current inflation, the Central Bank expected inflation an  the income gap \citep{Taylor1993}. 

We start with a short description of the {\em dynamic aggregate demand/aggregate supply model} (dynamic AD/AS model)   given in \citep{Mankiw}. The model is dynamic because some variables depend on their lagged (past period) values. In the model $t$ denotes the current period  (usually one year) and is given by the following equations.
\begin{itemize}
\item
The \emph{supply for goods and services}  is given by 
$Y_t = \overline{Y}_t-\alpha(r_t-\rho)+\epsilon_t$.
In this equation, 
the total output for goods and services is $Y_t$  and  the economy's natural output is  $\overline{Y}_t$.
The parameter $\alpha>0$ measures the sensitivity of the demand in front of the real interest rate $r_t$ and $\rho$
is the natural rate of interest.
The parameter $\epsilon_t$ represents the random demand shock. 
\item
The \emph{real interest rate} $r_t$ is given by a simplified version of Fisher's equation with no expectations: $r_t= i_t-\pi_t$.  Where  
$i_t$ is the nominal interest rate and $\pi_t$ is the inflation rate. 
\item
\emph{Philips curve}. The inflation $\pi_t$ at period $t$ is described by a version of the Phillips curve as a function of the past inflation $\pi_{t-1}$ and with no expectations: 
$\pi_t=\pi_{t-1}+\phi(Y_t-\overline{Y}_t)+v_t$. The parameter $\phi>0$ measures the responsiveness of the inflation to output fluctuations and $v_t$ is the random supply shock.
\item
{\sl Monetary policy rule}. It is based on Taylor's Rule \citep{Taylor1993}. The nominal interest rate $i_t$ is given by 
$i_t= \pi_t+\rho + \theta_\pi(\pi_t-\pi^*_t)+ \theta_Y(Y_t-\overline{Y}_t)$. In this equation, $\pi^*_t$ is the central bank's target inflation rate and $\theta_\pi>0$, $\theta_Y>0$ measure responsiveness.  
\end{itemize}
We avoid temporal dependencies and we consider a version of the IS-MP model assuming that past inflation coincides with the Central Bank target inflation. 
As before an equilibrium point  $(Y,\pi)$ is a solution of the system of equations.  
%

%%%%%%%%%%%%%%%%%%%%%%%%%%%%
%%%%% section Basic economic models
%%%%%%%%%%%%%%%%%%%%%%%%%%%%%%
\section{Linear Approximations and Exogenous Components}
\label{sec:laEG}
We introduce here the linear approximations of the IS-LM and the IS-MP models. For both simplified models we make explicit their exogenous  components. Let us first fix some notation. We use $\ecomodel$ to denote a  linear approximation of a model, informally  $\ecomodel \in\{\hbox{IS-LM}, \hbox{IS-MP}\}$. For a model $\ecomodel$,
$\mathcal P_\ecomodel$   denotes the set of exogenous parameters,  $\mathcal V_\ecomodel$ denotes the the set of  exogenous variables and  $\mathcal E_\ecomodel=\mathcal P_\ecomodel\cup \mathcal V_\ecomodel$ is the set  of \emph{exogenous components}. We use set notation like $b\in \mathcal P_\ecomodel$ or $T\in \mathcal V_\ecomodel$.
When $\ecomodel$ is clear from the context we use  $\mathcal E$, $\mathcal P$ and $\mathcal V$.

%%%%%%%%%%%%%%%%%%%%%%%%%%%%
%%%%% subsection The InvestmentSavings-LiquidityMoney (IS-LM ) 
%%%%%%%%%%%%%%%%%%%%%%%%%%%%%%

\begin{figure}[t]
\begin{center}
{\renewcommand{\arraystretch}{1.2}
\begin{tabular}{||l|c||}
\hline
{\sf Variables} &$\mathcal V$\\
\hline
Taxes & $0<T$\\
Exogenous  government spending & $0<G$\\
Money Supply & $0<M$\\
Price index &$0<P$\\
\hline
{\sf Parameters} &$\mathcal P$\\
\hline
Autonomous  consumption & $0<a$ \\
Marginal propensity to consume & $0< b< 1$\\
Exogenous investment & $0<c$\\
Interest sensitivity &  $0<d$\\
Income sensitivity for real money& $0<e$\\
Interest sensitivity for real money&  $0<f$\\
\hline
\end{tabular}
}
\end{center}
%\medskip
\caption{The exogenous components in the linear approximation to the IS-LM model. \label{exogenous}}
\end{figure}

\subsection{IS-LM  Model}
\label{Model}
We consider the linear approximation  of the IS-LM given in \citep{BalBraTur}. 
\begin{definition}
The IS-LM model is described by the following equations.
\[C(Y-T)= a+ b(Y-T),\; I(r) =c-d\, r,\;  L(r,Y)=eY-f\, r.\]  
The set \emph{exogenous components}  $\mathcal E=\mathcal V\cup\mathcal P$  is given by 
$\mathcal V= \{a, b, c,d, e, f\}$ and  $\mathcal P= \{T,G, M, P\}$ (see Figure \ref{exogenous}). 
\end{definition}
Let us express the  \emph{endogenous variables}  $\{Y, r\}$  in equilibrium as a function of $\mathcal E$.
The equilibrium condition  $Y=Y(r)$ (IS line) gives the equation 
$Y=a+ b(Y-T)+c-d\, r+G$. The condition  $r=r(Y )$ (MP line)  gives  
$M/P= eY-f\, r$. Thus, we get
\[Y= \frac{1}{(1-b)}(a+c + G - b\,T-d\, r)\quad 
\hbox{and}\quad  r= \frac{1}{f}(e\,Y- \frac{M}{P}).
\]
Using matrix notation, the  linear system  describing  $Y(r)$ and $r(Y)$  can be  written as
\[\left(
\begin{array}{cc}
1-b & d\\
P\, e &\hbox{\;\;}-P\, f
\end{array}
\right)
\left(
\begin{array}{c}
Y\\
r
\end{array}
\right)
=
\left(
\begin{array}{c}
a+c+G-b\, T\\
M
\end{array}
\right).
\]
Solving the system, we get the following expression for the equilibrium point $(Y, r)$. 
\[
\left(
\begin{array}{c}
Y\\
r
\end{array}
\right)
=
\frac{1}{(1-b)f+d\,e}\left(
\begin{array}{cc}
f & d/P\\
e &\hbox{\;\;}-(1-b)/P
\end{array}
\right)
\left(
\begin{array}{c}
a+c+G-b\, T\\
M
\end{array}
\right).
\]
Defining $g=(1-b)f+d\, e$, we get the following expressions.
\begin{eqnarray*}
&&Y=\frac{f}{g}(a+c+G-bT) + \frac{d}{g}\frac{M}{P} \hbox{\; and\;}
r=\frac{e}{g}(a+c+G-bT)-\frac{(1-b)}{g}\frac{M}{P}.
\end{eqnarray*}
As $(Y,r)$ depends on the valuation on $\mathcal E$, when needed we write
$(Y(\mathcal E), r(\mathcal E))$.  In order to simplify notation we often use  $\mathcal E$ to refer also to a valuation of the exogenous components of the model. 

\begin{ex}
\label{EasyExample}
\em
%\begin{footnotesize}
\label{Mankiw}
Consider the following valuation of $\mathcal E$:
 \[
{\renewcommand{\arraystretch}{1.2}
\begin{tabular}{|| c | c  |c | c | c | c | c | c | c | c ||}
\hline
$a$ & $b$ & $c$ & $d$ & $e$ & $f$ & $T$ & $G$ & $M$ & $P$\\
\hline
$200$ & $3/4$ & $200$ & $25$ & $1$ & $100$ & $100$ & $100$ & $1000$ & $2$\\
\hline
\end{tabular}
}
\]
The equilibrium point is described by  the linear system $Y= 1700-100r$,
$r=Y/100 - 5$.  Solving the system we get  
$Y= 1100$, $r=6$.
\hfill$\Box$
%\end{footnotesize}
\end{ex}
%

%%%%%%%%%%%%%%%%%%%%%%%%%%
%%%%%%%%%%%%% fig IS-MP %%%%%%%
%%%%%%%%%%%%%%%%%%%%%%%%%

%---- fig IS-LM-----
\begin{figure}[t]
\begin{center}
{\renewcommand{\arraystretch}{1.2}
\begin{tabular}{||l|c||}
\hline
{\sf Variables} &$\mathcal V$\\
\hline
Central bank's target inflation & $\pi^*$\\
Natural level of output & $\overline Y$\\
Shock to $Y$ & $\epsilon$\\
Shock to $\pi$ & $v$\\
%\hline
\hline
{\sf Parameters} & $\mathcal P$ \\
\hline
%\hline
$Y$ sensitivity to $r$ & $0<\alpha$ \\
Natural interest rate & $0<\rho$\\
$\pi$ sensitivity to $Y$ in Philips line  &  $0<\phi$\\
$i$ sensitivity to inflation in MP & $0<\theta_\pi$\\
$i$ sensitivity to $Y$ in MP &  $0<\theta_Y$\\
\hline
\end{tabular}
}
\end{center}
%\medskip
\caption{
The exogenous components in the linear approximation to the  IS-MP model. \label{exogenous-IS-MP}}
\end{figure}
%--- end fig IS-LM---

%---- IS-MP model ----
\subsection{IS-MP Model}
\label{sec-IS-MP}

We present now the linear approximation to the IS-MP model  from the dynamic AS/AD model as given in \citep{Mankiw}. Recall that  the equations on $Y_t$, $r_t$ and $i_t$ give us the following dynamic aggregate demand line
\[Y_t = \overline{Y}_t-\hat \alpha (\pi_t-\pi^*_t) + \hat \beta \epsilon_t
\hbox{\; where\;} 
\hat \alpha = \frac{\alpha\theta_\pi}{1+\alpha\theta_Y}
\hbox{\; and\;} 
\hat \beta = \frac{1}{1+\alpha\theta_Y}\]
and the aggregate {dynamic aggregate supply line} is given by
\[\pi_t=\pi_{t-1}+\phi(Y_t-\overline{Y}_t)+v_t.\]

We consider a  simplified linear approximation to the IS-MP in which we avoid  
dependences of inflation on their lagged values.  We consider only the case where the past inflation coincides with the Central Bank target inflation, i.e. $\pi_{t-1} = \pi^*_t$.  Furthermore, assuming that period $t$ is known, we consider the equations for a fixed period $t$ so that we can drop  the sub-index.

%We borrow from \citep{Mankiw} the set of exogenous components in the IS-MP model (see Figure~\ref{exogenous-IS-MP}). 
%
\begin{definition}
The  IS-MP model  is described by the following equations.
\begin{equation*}
Y= \overline{Y}-\hat \alpha (\pi-\pi^*)+\hat \beta \epsilon 
\hbox{\; and\; }
\pi=\pi^*+\phi(Y-\overline{Y})+v
\end{equation*}
The  set  of {\em exogenous components}  $\mathcal E=\mathcal V \cup \mathcal P$ in the  IS-MP model  is given by
$\mathcal V =\{\pi^*, \overline{Y},\epsilon, v\}$ and  $\mathcal P = \{\alpha, \rho, \phi,\theta_\pi,\theta_Y\}$ (see Figure~\ref{exogenous-IS-MP}).
\end{definition}
From the description of the equations of the IS-LM model we can write the equations of an equilibrium  for the endogenous variables  $\{Y, \pi\}$.  Solving the system, as a function of $\mathcal E$, the equilibrium point is given by 
\begin{equation*}
Y = \overline{Y} + \hat \gamma \epsilon -\hat \delta v
\hbox{\; and\; }
\pi = \pi^* + \hat \rho \epsilon + \hat \mu v
\end{equation*}
where
\begin{eqnarray*} 
&&
\hat \gamma 
%\hat\delta 
= \frac{\hat \beta}{1+\hat\alpha\phi}=
\frac{1}{1+\alpha(\theta_Y+\phi\theta_\pi)}
\hbox{\; and\;} 
\hat\delta 
%\hat \gamma 
= \frac{\hat \alpha}{1+\hat \alpha \phi} 
=\frac{\alpha\theta_\pi}{1+\alpha(\theta_Y+\phi\theta_\pi)}\\
&&\hat \rho = \frac{\phi\hat\beta}{1+\hat\alpha\phi} = \frac{\phi}{1+\alpha(\theta_Y+\phi\theta_\pi)}
\hbox{\; and\;} 
\hat \mu = \frac{1}{1+\hat\alpha\phi}= \frac{1+\alpha\theta_Y}{1+\alpha(\theta_Y+\phi\theta_\pi)}.
\end{eqnarray*}
The preceding equations can be  written  in matrix form as
\[
\left( 
\begin{array}{c}
Y\\
\pi
\end{array}
\right)=
\left( 
\begin{array}{rr}
\hat\gamma &\hbox{\;\;}-\hat\delta\\
\hat \rho & \hat\mu
\end{array}
\right)
\left( 
\begin{array}{c}
\epsilon\\
v
\end{array}
\right)+
\left( 
\begin{array}{c}
\overline{Y}\\
\pi^*
\end{array}
\right)
\]
or,  alternatively as 
\[
\left( 
\begin{array}{c}
Y\\
\pi
\end{array}
\right)=
\frac{1}{1+\alpha(\theta_Y+\phi\theta_\pi)}
\left( 
\begin{array}{rr}
1 &-\alpha\theta_\pi\\
\phi & \hbox{\;\;}1+\alpha\theta_Y
\end{array}
\right)
\left( 
\begin{array}{c}
\epsilon\\
v
\end{array}
\right)+
\left( 
\begin{array}{c}
\overline{Y}\\
\pi^*
\end{array}
\right).
\]
The equilibrium point is $(Y,\pi)$ is a function of the valuation $\mathcal E$. As before,  when needed we write $(Y(\mathcal E), \pi(\mathcal E))$.
\begin{ex}
\label{first-IS-MP}
Consider the following $\mathcal E$ for the IS-MP model:
\[
{\renewcommand{\arraystretch}{1.2}
\begin{tabular}{|| c | c  |c | c | c | c | c | c | c  ||}
\hline
$\alpha$ & $\rho$ & $\phi$ & $\theta_\pi$ & $\theta_Y$ & $\pi^*$ & \rule{0pt}{11pt}$\overline{Y}$ & $\epsilon$  & $v$\\
\hline
$1$ & $2$ & $1/4$ & $1/2$ & $1/2$ & $2$ & $100$ & $1$ & $1/2$ \\
\hline
\end{tabular}
}
\]
We can compute directly 
\[
\left( 
\begin{array}{c}
Y\\
\pi
\end{array}
\right)=
\frac{1}{13}
\left( 
\begin{array}{rr}
8 &-4\\
2 & \hbox{\;\;}12
\end{array}
\right)
\left( 
\begin{array}{c}
1\\
1/2
\end{array}
\right)+
\left( 
\begin{array}{c}
100\\
2
\end{array}
\right)=
\frac{1}{13}
\left( 
\begin{array}{c}
1306\\
34
\end{array}
\right)
\]
Therefore $Y= 1306/13$ and $\pi=34/13$.
\hfill$\Box$
\end{ex}

%%%%%%%%%%%%%%%%%%%%%%%%
%%%%%%%%%%%%% Perturbation Strength for the IS-LM%%%%%%%%
%%%%%%%%%%%%%%%%%%%%%%%%

\section{Perturbation Strength Model}
%\section{Stress Model}
\label{Stress}
Given $\ecomodel$ and a valuation $\mathcal E$ of its exogenous components, the computation of some positive aspects might have been underestimated or some  negative aspects overestimated. 
Therefore it makes sense to study $\ecomodel$ under slight (or severe) perturbations. This is achieved studying $\ecomodel$ when the valuation $\mathcal E$ is modified strategically.  The first component of our proposal is a description of the perceived potential perturbations in the model. Recall that in our approach we want to 
consider positive and negative aspects by means of  the two agents $\ang$ and $\dae$. 
 These agents  act over the model by changing the values  of some of the exogenous  components inside the limits of our analysis of reality. 

\begin{definition} 
\label{stress-model}
Let $\ecomodel$ be  a macroeconomic model and let $\mathcal E$ be the set of its exogenous components. 
A \emph{perturbation strength model} for $\mathcal E$ is a set of $\pertur$ of pairs of real numbers, i.e.,   
 $\pertur= \{(\delta_\ang(e),\delta_\dae(e))\mid e\in \mathcal E\}$ describing the {\em potential}  changes that can  be applied to the valuations of the exogenous components  by $\ang$ and $\dae$.   
\end{definition}
Observe that perturbations are  real numbers, so they can be either positive or negative. 
In order to provide intuition on the use of a perturbation strength model let us describe the roles of  $\ang$ and $\dae$ through some examples.

\begin{ex}
\label{perturbation}
A perturbation strength model $\pertur$ for the IS-LM Model $\mathcal E$ given in Example \ref{EasyExample} is:
%\begin{footnotesize}
\[
{\renewcommand{\arraystretch}{1.2}
\begin{tabular}{|| c ||c | c  |c || c |  c | c | c ||}
\hline
agent &$a$ & $b$ & $c$, $d$, $e$,  $f$ & $T$ & $G$ & $M$ & $P$\\
\hline
$\ang$ &$0$ & $+1/20$  & $0$ & $0$ & $+50$ & $0$ & $0$\\
$\dae$ &$0$ & $0$ & $0$ & $+50$ & $-25$ & $0$ & $+1$\\
\hline
\end{tabular}
}
\]
%\end{footnotesize}
Let us consider the roles of $\ang$ and $\dae$ separately. 
The angel  $\ang$ has the  ability to act upon the parameters  $\{b, G\}$. 
The marginal propensity to consume could be increased from $3/4$ to $4/5$. 
This is modeled by  $\delta_\ang(b)=1/20$.
The government spending  $G$ might  be  increased by $\delta_\ang(G) =50$.
For any other $e\in \mathcal E\setminus \{b, G\}$, $\ang$ has no possibility of  acting upon and therefore $\delta_\ang(e) =0$.
The daemon $\dae$ has the ability to act upon some of the parameters in $\{P,T, G\}$.
The price of goods could increase  by $\delta_\dae(P)=1$;
taxes could increase $\delta_\dae(T)=50$; spending could decrease by $\delta_\dae(G)=-25$.
Again, for   $e\in \mathcal E\setminus \{P,T, G\}$, $\delta_\dae(e) =0$.
In this case, who are $\ang$ and $\dae$?  Are they policy makers?
Clearly $\ang$ cannot be a policy maker because a policy maker cannot influence upon the propensity to consume. 
As we said before,  both $\ang$ and $\dae$ are the two faces of the system analyser trying to give values to the possible "perturbations" of the system. The analyser  also need to precise if the perturbations will act positively (like $\ang$) or negatively (like $\dae$) over the system.
\hfill$\Box$
\end{ex}
\begin{ex}
\label{perturbation-IS-MP}
In this  example $\ang$ and $\dae$ can perturb  the predetermined variables $\mathcal V$ in the IS-MP model given in Example \ref{first-IS-MP}. 
One perturbation strength model $\pertur$ is 
\[
{\renewcommand{\arraystretch}{1.2}
\begin{tabular}{|| c ||c | | c | c | c | c||}
\hline
\rule{0pt}{11pt}
agent &$\alpha$, $\rho$, $\phi$, $\theta_\pi$, $\theta_Y$& $\pi^*$ & $\overline{Y}$ & $\epsilon$ & $v$\\
\hline
$\ang$ &$0$ & $0$ & $+25$ & $+2$ & 0\\
$\dae$ &$0$ & $+3$& $0$& $0$& $+2$ \\
\hline
\end{tabular}
}
\]
In $\pertur$ ,  $\ang$ can potentially act over $\overline Y$ and $\epsilon$.  That is 
$\delta_\ang(\overline{Y}) = 25$ and $\delta_\ang(\epsilon) = 2$ but, for any other $e\in\{\alpha, \rho, \phi, \theta_\pi, \theta_Y,\pi^*, v\}$, $\delta_\ang(e)=0$.
$\dae$ can potentially act over $\pi^*$ and $v$ with 
$\delta_\dae(\pi^*) = 3$ and $\delta_\dae(v) = 2$, all other $\delta_\dae(e)=0$.
\hfill\qed
\end{ex}

%The Example \ref{EasyExample} goes along those lines. 

%
The concrete valuations of  $\mathcal E$  under  perturbation strength model $\pertur$ depend on the particular selection of  components to be perturbed performed by $\ang$ and $\dae$.
For a set $s\subseteq    \mathcal E$, $\# s$ denotes the number of components in $s$.

%Only when $\ang$ and $\dae$ have some freedom to choose $(a,d)$ independently an uncertain situation arise. This situation  will be analysed through a strategic game. 
%
\begin{definition}
\label{DefDeltaStress}
Consider a model $\ecomodel$ having exogenous components $\mathcal E$  under a perturbation strength model $\pertur$. Given a \emph{joint action} $(a,d)$ with $a,d\subseteq \mathcal E$. The valuation under strength $\mathcal E'$ of $\mathcal E$ given $\pertur$ and $(a,d)$ is noted $\stress_{\pertur}(\mathcal E)[a,d]$  and it is defined as follows.
For any $e\in \mathcal E$, 
$\stress_{\pertur}(e)[a,d] =e+\delta_{\pertur}(e)[a,d]$  where 
\[
\delta_{\pertur}(e)[a,d]=
\begin{cases}
0 & e\notin a\cup d\\
\delta_\ang(e)& e\in a\setminus d\\
\delta_\dae(e) & e\in d\setminus a\\
\delta_\ang(e)+\delta_\dae(e) & e\in a\cap d
\end{cases}
\]%
finally, 
$\stress_{\pertur}(\mathcal E)[a, d]=\{\stress_{\pertur}(e)[a,d]\mid e\in \mathcal E\}$.
\end{definition}

As we said before, there is a convention in empirical work to  measure  
risk and uncertainty by 
historical standard deviation. 
Assume that  $e\in \mathcal E$ is estimated 
statistically  having expectation $\mu_e$ and  
standard deviation $\sigma_e$ \citep{ArgimiroBook}. The values $\mu_e$ and $\sigma_e$, can be used  to define different  perturbation strength models. We focus on  different possible cases (even if they are unlikely). 
Assume that initially we take for the propensity to consume the value $b=\mu_b$. 
If we are interested in modeling a  case  where  consumption is over the mean  $\mu_e$ and  we feel that this is  good (for people), we take $\ang$   
as the agent which can potentially  increase the estimation of the consumption by setting $\delta_\ang(b)=\sigma_b$. In such a case, the modeled system behaves with  a propensity to consume  $b'=\mu_b+\delta_\ang(b)= \mu_e+\sigma_e$.
If we like to model the case where consumption is "slightly" under the mean and we feel that this is not so good, we can take $\delta_\dae(b)=-\sigma_b/2$ and $b'=\mu_e+\delta_\dae(b)=\mu_e-\sigma_b/2$ (assuming $\mu_e>\sigma_b/2$).  Finally,  we could consider a situation in which both cases are possible together into $b'=\mu_e+\delta_\ang(b)+ \delta_\dae(b) = \mu_e+\sigma_e/2$.  

Observe that in the previous  case $\ang$ and $\dae$ are far from being  policy makers. They just provide  a way to focus on  possible system behaviours. However, it is also possible to frame this approach into a policies.
Consider a policy maker liking to know the possible effects of a tax change. Suppose that he is uncertain about the effects of a tax change $\delta_T>0$.  When  the tax 
increase is considered good, we can take $\delta_\ang(T)=\delta_T$ and 
 $\delta_\dae(T)=-\delta_T$.
In the opposite case  (decreasing taxes is a good thing), we can  take  
$\delta_\ang(T)=-\delta_T$ and  $\delta_\dae(T)=\delta_T$. 

Finally, a last important case remains to be considered. 
Imagine an analyser (or a planner) aiming to look at the resilience of a systems under some new and quite open future situation.  Note that this future situation might not  be "the one more likely to happen" (according to the planner opinion). It could be a strange, rather unlikely but extremely dangerous (or extremely favourable)  situation. Therefore   a $\pertur$ model is a way to study the strength of the system under such situation.
Thus, $\pertur$ it might or might not be derived from a statistical study  or a policy plan.
It is precisely  the task of the analyser to frame the future uncertainty into a perturbation strength model providing the potential actions that  $\ang$ and $\dae$ can perform on the parameters.

Let us move to the computation of the equilibrium point in the valuation obtained after a joint action $(a,d)$ of the two agents, denoted  as  
$(Y(\stress_{\pertur}(\mathcal E)[a,d]), r(\stress_{\pertur}(\mathcal E)[a,d])$.
When $\pertur$ is clear from the context, and we want to emphasize the role of the choice of parameters
$(a,d)$ we note $\stress_{\pertur}(\mathcal E)[a, d]$ as $\mathcal E(a,d)$ and the equilibrium point as
$(Y(a,d), r(a,d))$.
The following result points  out some basic properties  between 
the different components:  $\mathcal E$, $\pertur$, $(a,d)$ and $\stress$. In particular, it analyzes explicitly  some cases where parts  of the system (or the whole system)  remains unchanged. In other words, when in $\pertur$ no unattended modifications can appear.

\begin{lemma} 
\label{lemmaTrivial} Let  $\ecomodel$ be a model  having exogenous components $\mathcal E$  under a perturbation strength model $\pertur$ and consider a joint  action $(a,d)$. Then 
$\stress_{\pertur}(e)[a,d] = e$ if and only if either $ e\notin a\cup d$ or  $\delta_\ang(e)=\delta_\dae(e)=0$. The whole system remains unperturbed, i.e. $\stress_{\pertur}(\mathcal E)[a,d] =\mathcal E$ when  $\pertur = \{(0,0)\mid e\in \mathcal E\}$ or
 $(a,d)=(\emptyset,\emptyset)$.

\end{lemma}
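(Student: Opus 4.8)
The plan is to reduce the entire statement to the single arithmetic observation that, by Definition~\ref{DefDeltaStress}, $\stress_{\pertur}(e)[a,d]=e+\delta_{\pertur}(e)[a,d]$, so that $\stress_{\pertur}(e)[a,d]=e$ holds \emph{exactly} when $\delta_{\pertur}(e)[a,d]=0$. Thus both parts of the lemma are really statements about when the four-case expression defining $\delta_{\pertur}(e)[a,d]$ vanishes, and I would organise the proof around the partition of $\mathcal E$ induced by a joint action $(a,d)$ into the four blocks $e\notin a\cup d$, $e\in a\setminus d$, $e\in d\setminus a$ and $e\in a\cap d$.

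For the \emph{if} direction I would simply substitute the two hypotheses into these four cases. If $e\notin a\cup d$ then the first line of the definition gives $\delta_{\pertur}(e)[a,d]=0$ directly. If instead $\delta_\ang(e)=\delta_\dae(e)=0$, then each of the remaining three lines ($\delta_\ang(e)$, $\delta_\dae(e)$, or $\delta_\ang(e)+\delta_\dae(e)$) also evaluates to $0$, so $\stress_{\pertur}(e)[a,d]=e$ in every case. The second assertion is then an immediate corollary applied uniformly over all $e\in\mathcal E$: if $\pertur=\{(0,0)\mid e\in\mathcal E\}$ then $\delta_\ang(e)=\delta_\dae(e)=0$ for every component, and if $(a,d)=(\emptyset,\emptyset)$ then $e\notin a\cup d$ for every component; in either situation the \emph{if} direction yields $\stress_{\pertur}(e)[a,d]=e$ for all $e$, hence $\stress_{\pertur}(\mathcal E)[a,d]=\mathcal E$.

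The delicate part, and the step I expect to be the real obstacle, is the \emph{only if} direction. Assuming $\delta_{\pertur}(e)[a,d]=0$, I must derive the disjunction ``$e\notin a\cup d$ or $\delta_\ang(e)=\delta_\dae(e)=0$''. The subtlety is that a single block of the partition constrains only \emph{one} linear combination of the two strengths: the block $a\setminus d$ forces only $\delta_\ang(e)=0$, the block $d\setminus a$ only $\delta_\dae(e)=0$, and the block $a\cap d$ only the sum $\delta_\ang(e)+\delta_\dae(e)=0$. Hence neither disjunct is automatic, and the biconditional as stated closes only once one excludes the two degenerate configurations that would otherwise produce $\stress_{\pertur}(e)[a,d]=e$ with $e\in a\cup d$ and the strengths not both zero: (i) a \emph{dormant} opposite strength, where an agent acts on $e$ while having zero strength on it but the other agent has a nonzero strength on $e$; and (ii) an \emph{exact cancellation} $\delta_\ang(e)=-\delta_\dae(e)\ne0$ on a jointly chosen $e$. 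Under the natural convention that each agent selects only components it can genuinely move, i.e. $a\subseteq\{e:\delta_\ang(e)\ne0\}$ and $d\subseteq\{e:\delta_\dae(e)\ne0\}$, together with the assumption that opposite strengths never cancel exactly, configuration (i) is impossible and (ii) cannot occur; the three non-trivial blocks then each contradict $\delta_{\pertur}(e)[a,d]=0$, forcing $e\notin a\cup d$ and delivering the first disjunct. I would therefore either record these conventions explicitly, or restate the converse with the sharper case-dependent vanishing condition that the definition literally yields.
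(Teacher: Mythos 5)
Your proof follows the same route as the paper's: reduce $\stress_{\pertur}(e)[a,d]=e$ to the vanishing of $\delta_{\pertur}(e)[a,d]$ and then run a case analysis over the four blocks $e\notin a\cup d$, $e\in a\setminus d$, $e\in d\setminus a$, $e\in a\cap d$. Your \emph{if} direction and your derivation of the second assertion are exactly what the paper does, and they are correct.

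Your diagnosis of the \emph{only if} direction is also correct, and it exposes a step the paper's proof merely asserts: the paper claims that $\delta_{\pertur}(e)[a,d]=0$ is equivalent to the disjunction ``$\delta_\ang(e)=\delta_\dae(e)=0$ or $e\notin a\cup d$'' and gives no argument. As you observe, that equivalence fails in general. If $e\in a\setminus d$ with $\delta_\ang(e)=0$ but $\delta_\dae(e)\neq 0$ (a configuration the paper's own perturbation strength models permit, since they routinely set one agent's strength to zero on a component where the other agent's strength is nonzero), or if $e\in a\cap d$ with $\delta_\ang(e)=-\delta_\dae(e)\neq 0$, then $\stress_{\pertur}(e)[a,d]=e$ while neither disjunct of the lemma holds. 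Either of your proposed repairs --- requiring $a\subseteq\{e:\delta_\ang(e)\neq 0\}$ and $d\subseteq\{e:\delta_\dae(e)\neq 0\}$ together with no exact cancellation on $a\cap d$, or restating the converse with the case-dependent vanishing condition that the definition literally yields --- closes the gap; the paper adopts neither, so your more cautious formulation is the defensible one. There is no gap on your side; the gap is in the statement you were asked to prove.
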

%
%%%%%%% begin proof %%%%%%%%%
\begin{proof}
%{\sl Proof.}
Let $(a,d)$ be a joint  action.
For $e'\in \mathcal E(a,d)$,  according to Definition~\ref{DefDeltaStress},  we have
$e'=e+\delta_{\pertur}(e)[a,d]$. 
Thus,  $e'=e$  if and only if  $\delta_{\pertur}(e)[a,d]=0$ and therefore. The later condition is equivalent to $\delta_\ang(e)=\delta_\dae(e)=0$  or $ e\notin a\cup d$.
The second part of the statement follows trivially from this fact and the definitions. 
\hfill
\end{proof}
%\bigskip
%\end{proof}
%%%%%%%% end proof %%%%%%
%
%\bigskip

%\noindent
In the following  example we provide an application of the Definition \ref{DefDeltaStress} to the IS-LM model.
%
%---------------- EXAMPLE--------------------
%

\begin{ex}
%\begin{footnotesize}
\label{ExStress}
We continue with Examples \ref{Mankiw} and  \ref{perturbation}  under the joint action  $(a, d)=(\{b\}, \{P,G\})$.  After the joint action we get a new valuation,  letting  $\mathcal E'=\mathcal E(\{b\}, \{P,G\})$
and $e'=\stress(e)[\{b\}, \{P,G\}]$. 

We have $\mathcal E'=
\stress_{\pertur}(\mathcal E)[\{b\}, \{P,G\}] =\{a',b',c',d',e',f',T',G',M',P'\}$.
The valuations $\mathcal E$ and the computation  of $\mathcal E'$ is sketched in the following table. 

\begin{footnotesize}
%\small{
\[
{\renewcommand{\arraystretch}{1.2}
%\resizebox{\columnwidth}{1cm}{%
\begin{tabular}{||p{0.5cm} ||p{1.5cm} ||c | p{0.7cm} | p{0.3cm}  |p{0.25cm} | c | c | 
|c | p{0.45cm} | p{0.45cm}| p{0.2cm} ||}
%\begin{tabular}{||c ||c  ||c || c | c  |c | c | c | c | c | c | c | c ||}
\hline
agent & choice   &$a$ & $b$ & $c$ & $d$ & $e$ & $f$ & $T$ & $G$ & $M$ & $P$\\ 
\hline
 & & $200$ & $3/4$ & $200$ & $25$ & $1$ & $100$ & $100$ & $100$ & $1000$ & $2$\\
\hline
$\ang$ & $a=\{b\}$  & & $+1/20$ & & & & & & & &  \\
$\dae$ & $d=\{P,G\}$ & &  & & & & & &$-25$ & & +1 \\
\hline
&    & $a'$ & $b'$ & $c'$ & $d'$ & $e'$ & $f'$ & $T'$ & $G'$ & $M'$ & $P'$\\
\hline
& & 200 & $4/5$ & 200 & 25 & 1 & 100 & 100 
& $75$
& 1000 &3\\
\hline
\end{tabular}
%}
}
\]
%}
\end{footnotesize}
Observe that  in the joint action $\ang$ acts over the marginal propensity to consume $b$ and $\dae$ acts  (at the same time) over the price index $P$
and the exogenous government spending $G$. 
As $a\cup d= \{b, P,G\}$, 
in the tuple $\mathcal E'=\mathcal E(\{b\}, \{P,G\})$ only the values
corresponding to the components $b$, $P$ and $G$ are perturbed. The other parameters  remain unchanged. According to Definition \ref{DefDeltaStress} and Lemma \ref{lemmaTrivial}, $e=e'$ for $e\in \mathcal E\setminus\{b,P,G\}$ and $e'= e+\delta_{\pertur}(e)[a,d]$ for 
$e\in \{b,P,G\}$. Taking into account the perturbation strength model $\cal S$ we get
\begin{align*}
b' &=b+\delta_\ang(b)=3/4+1/20=4/5, \\
P' &=P+\delta_\dae(P) = 2+1=3,\\
G' &=G+\delta_\dae(G) = 100-25=75.
\end{align*} 
\remove{Let us compute $b'=b+ \delta_{\pertur}(b)[\{b\}, \{P,G\}]$.
Note that $b'$,  the value of the marginal propensity to consume $b$ when perturbed according to the selected joint action.  
As $b$ is perturbed  only by $\ang$ as we can see in the equalities  
$b\in a\setminus d=\{b\}\setminus\{P,G\}=\{b\}$. Therefore, according to Definition \ref{DefDeltaStress} we have $\delta_{\pertur}(b)[\{b\}, \{P,G\}]=\delta_\ang(b)=1/20$  and we  get
$b'=b+\delta_\ang(b)=3/4+1/20=4/5$. 

Let compute $P'$. Note that $P$ is just perturbed by $\dae$ because 
$P\in \{P,G\}\setminus \{b\}$. As $\delta_{\pertur}(P)[\{b\}, \{P,G\}]= \delta_\dae(P)= +1$ we obtain
$P'=P+\delta_\dae(P) = 2+1=3$. Similarly,  as 
$\delta_\dae(G)= -25$ we get
$G'=G+\delta_\dae(G) = 100-25=75$. 
}
Finally, the  equilibrium point corresponding to $\mathcal E(\{b\}, \{P,G\})$ is 
$Y(\{b\}, \{P,G\})= 28700/27\approx 1062.96$, 
$r(\{b\}, \{P,G\})=197/27\approx  7.29$.
\hfill$\Box$
\end{ex}
When $\ecomodel$  is  perturbed from valuation $\mathcal E$ into $\mathcal E'$ by joint action $(a,d)$, we would like to isolate the effects of the the perturbation by expressing the equilibrium point $(Y(a,d), r(a,d))$  with respect to the non-perturbed  equilibrium point $(Y,r)$. 
\remove{
Informally  $\ecomodel' = \ecomodel + \cdots$ where the dots corresponds to the perturbation terms. 
Such factorizations are needed to obtain the results given in Sections 
\ref{section-IS-LM} and \ref{section-IS-MP}.
In the case of the 
 IS-LM or IS-MP we would like to get factorizations: 
\[
\left( 
\begin{array}{c}
Y(a,d)\\
r(a,d)
\end{array}
\right)=
\left( 
\begin{array}{c}
Y\\
r
\end{array}
\right)+ \cdots
, \quad
\left( 
\begin{array}{c}
Y(a,d)\\
\pi(a,d)
\end{array}
\right)=
\left( 
\begin{array}{c}
Y\\
\pi
\end{array}
\right)
+
\cdots
\]}
In general this is difficult to obtain, however, when an important  part of $\mathcal E$ cannot be perturbed, we can take advantage of the linear structure of the IS-LM and IS-MP models and get an explicit expression. In the following lemmas we provide such expressions for some of such cases. Those results will be used later on.  

\remove{ of deals with such a cases. Lemmas  \ref{linearity-IS-LM} and 
\ref{EasyFiscalPolicies} deals with the  IS-LM model. Lemmas \ref{IS-MP:lemma-linear} and \ref{JustPerturbationsIS-MP} 
deals with the IS-MP model.}
\begin{lemma}
\label{linearity-IS-LM}
Consider a perturbation strength model $\pertur$ for the IS-LM model such that, for 
$e\in\{b, d, e ,f, P\}$, we have   $\delta_\ang(e)=\delta_\dae(e) =0$.  
Let  $(a,d)$ be a joint action and define 
\[\delta_{\pertur}(a,c,G, T)[a,d] =
\delta_{\pertur}(a)[a,d] +\delta_{\pertur}(c)[a,d]+\delta_{\pertur}(G)[a,d]-b\, \delta_{\pertur}(T)[a,d].
\]
Then,  it holds 
%\begin{small}
\begin{eqnarray*}
\left( 
\begin{array}{c}
Y(a,d)\\
r(a,d)
\end{array}
\right)=
\left( 
\begin{array}{c}
Y\\
r
\end{array}
\right)+
\frac{1}{g}\left(
\begin{array}{cc}
f & d/P\\
e &\hbox{\;\;}-(1-b)/P
\end{array}
\right)
\left(
\begin{array}{c}
\delta_{\pertur}(a,c,G,T)[a,d]\\
\delta_{\pertur}(M)[a,d]
\end{array}
\right) 
\end{eqnarray*}
%\end{small}
\end{lemma}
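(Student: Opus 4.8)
The plan is to exploit the hypothesis that $\delta_\ang(e) = \delta_\dae(e) = 0$ for every $e \in \{b,d,e,f,P\}$, which freezes the entire coefficient matrix so that only the right-hand side vector changes under the joint action $(a,d)$. First I would observe that both the scalar $g = (1-b)f + de$ and the matrix
\[
\frac{1}{g}\left(\begin{array}{cc} f & d/P \\ e & -(1-b)/P \end{array}\right)
\]
are built exclusively from the components $b,d,e,f,P$. By Definition~\ref{DefDeltaStress} and Lemma~\ref{lemmaTrivial}, $\stress_{\pertur}(e)[a,d] = e$ whenever $\delta_\ang(e) = \delta_\dae(e) = 0$, so each of these five components is unchanged by the joint action; consequently the matrix and $g$ take identical values at $\mathcal E$ and at $\mathcal E(a,d)$.

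Next I would compute the perturbed right-hand side. Since $b$ is unperturbed we have $b' = b$, whereas $a,c,G,T,M$ each acquire their perturbation term $\delta_{\pertur}(\cdot)[a,d]$. Expanding $a' + c' + G' - b'T'$ and separating the unperturbed part gives $(a+c+G-bT) + \delta_{\pertur}(a,c,G,T)[a,d]$, which is precisely the combination named in the statement of the lemma, while the second coordinate is simply $M + \delta_{\pertur}(M)[a,d]$. Thus the perturbed right-hand side equals the original right-hand side plus the perturbation vector with coordinates $\delta_{\pertur}(a,c,G,T)[a,d]$ and $\delta_{\pertur}(M)[a,d]$.

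Finally I would apply the frozen matrix to this sum and distribute by linearity: its action on the original right-hand side reproduces exactly $(Y,r)$, and its action on the perturbation vector yields the stated correction term, which completes the identity. The only point requiring care — the nearest thing to an obstacle — is the bookkeeping around $b$, since $b$ occurs both inside the matrix (through $g$ and the entry $-(1-b)/P$) and in the term $-bT$ of the right-hand side; one must invoke $\delta_\ang(b) = \delta_\dae(b) = 0$ in both roles and, in particular, keep $b$ as a fixed constant multiplying $\delta_{\pertur}(T)[a,d]$ rather than treating the product $bT$ as jointly perturbed.
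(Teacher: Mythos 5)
Your proposal is correct and follows essentially the same route as the paper's proof: freeze the matrix and $g$ using the hypothesis on $\{b,d,e,f,P\}$, expand the perturbed right-hand side $a'+c'+G'-bT'$ and $M'$ into unperturbed part plus perturbation, and conclude by linearity. Your explicit remark about $b$ playing a double role (in the matrix and in the term $-bT$) is a careful addition the paper leaves implicit, but it does not change the argument.
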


\begin{proof}
%%%%%%% begin proof %%%%%%%%%
%The first part is a direct consequence of Definition \ref{DefDeltaStress} and Lemma \ref{lemmaTrivial}.  Let us consider the second part.
As $\ang$ and $\dae$ cannot perturb  components in $\{b,d,e, f, P\}$,  the $2\times 2$ matrix and the value $g=(1-b)f+de$ remain unchanged under $\pertur$ and any joint action $(a,d)$. Therefore,

\begin{eqnarray*}
\left( 
\begin{array}{c}
Y(a,d)\\
r(a,d)
\end{array}
\right)=
\frac{1}{g}\left(
\begin{array}{cc}
f & d/P\\
e &\hbox{\;\;}-(1-b)/P
\end{array}
\right)
\left(
\begin{array}{c}
a'+c'+G'-bT'\\
M'
\end{array}
\right) 
\end{eqnarray*}
and  we get
\begin{eqnarray*}
&&a'+c'+G'-b\, T'\\
&&\qquad= a+c+G-b\, T+ \delta_{\pertur}(a)[a,d] +\delta_{\pertur}(c)[a,d]+\delta_{\pertur}(G)[a,d]-b\, \delta_{\pertur}(T)[a,d]\\
&&\qquad= a+c+G-b\, T +\delta_{\pertur}(a,c,G,T)[a,d],\\
&&M'=M+\delta_{\pertur}(M)[a,d].
\end{eqnarray*}
Using straightforward linear algebra the result follows.
\end{proof}
%%%%%%%% end proof %%%%%%

%\medskip
%\noindent
We are also interested in valuations $\mathcal E$ in relation to fiscal policies. In those situations only $T$ and $G$ can suffer a perturbation. In such a case we get the following result.

\begin{lemma}
\label{EasyFiscalPolicies}
Consider a perturbation strength model  $\pertur$ where $\{G,T\}$ are the unique components that can be perturbed.   
For a    joint  action  $(a,d)$,  we have 
\begin{eqnarray*}
\left( 
\begin{array}{c}
Y(a,d)\\
r(a,d)
\end{array}
\right)=
\left( 
\begin{array}{c}
Y\\
r
\end{array}
\right)
+
\frac{1}{g}\delta_{\pertur}(G,T)[a,d] 
\left(
\begin{array}{c}
f \\
e 
\end{array}
\right)
\end{eqnarray*}
where $\delta_{\pertur}(G,T)[a,d] =
\delta_{\pertur}(G)[a,d]-b\delta_{\pertur}(T)[a,d]$.
\end{lemma}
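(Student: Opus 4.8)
The plan is to derive this result as a direct specialization of Lemma~\ref{linearity-IS-LM}, inserting the extra zeros forced by the stronger hypothesis. First I would check that the assumption here implies the assumption there: if $\{G,T\}$ are the only components that can be perturbed, then in particular none of $\{b,d,e,f,P\}$ can be perturbed, so $\delta_\ang(e)=\delta_\dae(e)=0$ for every $e\in\{b,d,e,f,P\}$. Hence the conclusion of Lemma~\ref{linearity-IS-LM} applies verbatim to any joint action $(a,d)$, and I may start from the factorization it provides.

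Next I would exploit the fact that only $\{G,T\}$ are perturbable, which is strictly stronger than the hypothesis of Lemma~\ref{linearity-IS-LM}. Since $a$, $c$ and $M$ are now also fixed, Definition~\ref{DefDeltaStress} gives $\delta_{\pertur}(a)[a,d]=\delta_{\pertur}(c)[a,d]=\delta_{\pertur}(M)[a,d]=0$. Substituting these into the expression for $\delta_{\pertur}(a,c,G,T)[a,d]$ from Lemma~\ref{linearity-IS-LM} collapses it to exactly $\delta_{\pertur}(G)[a,d]-b\,\delta_{\pertur}(T)[a,d]=\delta_{\pertur}(G,T)[a,d]$, while the second entry of the perturbation vector, $\delta_{\pertur}(M)[a,d]$, becomes $0$.

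Finally I would carry out the matrix–vector product of Lemma~\ref{linearity-IS-LM} with this zero second entry. Only the first column of the $2\times2$ matrix survives, so
\[
\frac{1}{g}\left(
\begin{array}{cc}
f & d/P\\
e &\hbox{\;\;}-(1-b)/P
\end{array}
\right)
\left(
\begin{array}{c}
\delta_{\pertur}(G,T)[a,d]\\
0
\end{array}
\right)
=
\frac{1}{g}\,\delta_{\pertur}(G,T)[a,d]
\left(
\begin{array}{c}
f\\
e
\end{array}
\right),
\]
which is precisely the correction term added to $(Y,r)$ in the statement. There is no genuine obstacle here: the entire content is bookkeeping, and the only point requiring care is verifying that the hypothesis of the present lemma is strictly stronger than that of Lemma~\ref{linearity-IS-LM}, so that the earlier formula may legitimately be invoked before the additional zeros coming from $a$, $c$ and $M$ are inserted.
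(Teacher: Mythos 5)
Your proposal is correct and follows essentially the same route as the paper: both specialize Lemma~\ref{linearity-IS-LM}, use the stronger hypothesis to set $\delta_{\pertur}(a)[a,d]=\delta_{\pertur}(c)[a,d]=\delta_{\pertur}(M)[a,d]=0$ so that $\delta_{\pertur}(a,c,G,T)[a,d]$ collapses to $\delta_{\pertur}(G,T)[a,d]$, and then evaluate the matrix product with zero second entry. Your write-up is if anything slightly more explicit than the paper's about checking that the present hypothesis implies that of Lemma~\ref{linearity-IS-LM}.
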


%%%%%%% begin proof %%%%%%%%%
\begin{proof}
As  $e\in \{a,b,c,d,e,f,M,P\}$ cannot be perturbed, i.e.,
$\delta_\ang(e)=\delta_\dae(e)=0$.  From  Lemma \ref{linearity-IS-LM}, we have 
$\delta_{\pertur}(M)[a,d]=0$. Therefore, 
\[
\delta_{\pertur}(a,c,G,T)[a,d] =
\delta_{\pertur}(G)[a,d]-b\, \delta_{\pertur}(T)[a,d] =\delta_{\pertur}(G,T)[a,d] 
\]
and by computing the matrix product the result follows.
\end{proof}
%%%%%%%% end proof %%%%%%

%
%\medskip

Our next result provides the equilibrium point in the  IS-MP model when the exogenous parameters cannot be perturbed.

\begin{lemma}
\label{IS-MP:lemma-linear}
Consider a perturbation strength model $\pertur$ for the IS-MP model such that, for 
$e\in\mathcal P$,  $\delta_\ang(e)=\delta_\dae(e)=0$.  For any joint  action 
$(a,d)$ it holds that
\[
\left( 
\begin{array}{c}
Y(a,d)\\
\pi(a,d)
\end{array}
\right)=
\left( 
\begin{array}{c}
Y\\
\pi
\end{array}
\right)
+
\left( 
\begin{array}{rr}
\hat\gamma &\hbox{\;\;}-\hat\delta\\
\hat \rho & \hat\mu
\end{array}
\right)
\left( 
\begin{array}{c}
\delta_{\pertur}(\epsilon)[a,d]\\
\delta_{\pertur}(v)[a,d]
\end{array}
\right)+
\left( 
\begin{array}{c}
\delta_{\pertur}(\overline{Y})[a,d]\\
\delta_{\pertur}(\pi^*)[a,d]
\end{array}
\right).
\]   
\end{lemma}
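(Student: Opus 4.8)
The plan is to imitate the proof of Lemma~\ref{linearity-IS-LM}, exploiting that the four entries $\hat\gamma,\hat\delta,\hat\rho,\hat\mu$ of the equilibrium matrix depend only on the exogenous \emph{parameters} $\mathcal P=\{\alpha,\rho,\phi,\theta_\pi,\theta_Y\}$ and on no exogenous \emph{variable}. Since the hypothesis forbids any perturbation of those parameters, i.e.\ $\delta_\ang(e)=\delta_\dae(e)=0$ for every $e\in\mathcal P$, Lemma~\ref{lemmaTrivial} yields $\stress_\pertur(e)[a,d]=e$ for all $e\in\mathcal P$. Consequently the whole $2\times2$ matrix is invariant under $\pertur$ and under every joint action $(a,d)$, and only the variables in $\mathcal V=\{\pi^*,\overline Y,\epsilon,v\}$ can change.

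First I would write the perturbed equilibrium point directly from the closed form of Section~\ref{sec-IS-MP}, substituting each exogenous variable $e$ by its perturbed value $e+\delta_\pertur(e)[a,d]$:
\[
\begin{pmatrix}Y(a,d)\\ \pi(a,d)\end{pmatrix}
=\begin{pmatrix}\hat\gamma & -\hat\delta\\ \hat\rho & \hat\mu\end{pmatrix}
\begin{pmatrix}\epsilon+\delta_\pertur(\epsilon)[a,d]\\ v+\delta_\pertur(v)[a,d]\end{pmatrix}
+\begin{pmatrix}\overline Y+\delta_\pertur(\overline Y)[a,d]\\ \pi^*+\delta_\pertur(\pi^*)[a,d]\end{pmatrix}.
\]
Next I would expand by linearity of the matrix--vector product, splitting each column vector into its unperturbed part and its perturbation part. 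The unperturbed part reassembles into
\[
\begin{pmatrix}\hat\gamma & -\hat\delta\\ \hat\rho & \hat\mu\end{pmatrix}
\begin{pmatrix}\epsilon\\ v\end{pmatrix}
+\begin{pmatrix}\overline Y\\ \pi^*\end{pmatrix}
=\begin{pmatrix}Y\\ \pi\end{pmatrix},
\]
the equilibrium of the unperturbed system, while the perturbation part gives precisely the two remaining summands of the statement.

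There is essentially no obstacle beyond this bookkeeping: the argument is pure linear algebra once the matrix has been recognised as frozen. The single point deserving care is the claim that $\hat\gamma,\hat\delta,\hat\rho,\hat\mu$ are functions of $\mathcal P$ alone; this is immediate from the explicit expressions $\hat\gamma=1/(1+\alpha(\theta_Y+\phi\theta_\pi))$, $\hat\delta=\alpha\theta_\pi/(1+\alpha(\theta_Y+\phi\theta_\pi))$, and the analogous formulas for $\hat\rho$ and $\hat\mu$, none of which involves $\pi^*,\overline Y,\epsilon$ or $v$. Hence the matrix invariance under $(a,d)$ is guaranteed and the factorisation follows.
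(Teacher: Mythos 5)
Your proposal is correct and follows essentially the same route as the paper: both observe that the hypothesis freezes all entries of the $2\times2$ matrix (since $\hat\gamma,\hat\delta,\hat\rho,\hat\mu$ depend only on $\mathcal P$), substitute the perturbed values of $\epsilon$, $v$, $\overline Y$, $\pi^*$ into the closed form of the equilibrium, and conclude by linearity. Your added remark that the matrix entries involve no exogenous variable is a useful explicit check that the paper leaves implicit, but the argument is the same.
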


%%%%%%% begin proof %%%%%%%%%
\begin{proof}
As $\delta_{\pertur}(e) = 0$, for $e\in \{\alpha,\rho,\phi,\theta_\pi,\theta_Y\}$, $\delta_{\pertur}(e)[a,d]= e$. Therefore, 
the $2\times2$ matrix  in the expression for the equilibrium remains unchanged. Thus the values 
$Y(a,d)$ and $\pi(a,d)$ can be expressed as
\[
\left( 
\begin{array}{c}
Y(a,d)\\
\pi(a,d)
\end{array}
\right)
=
\left( 
\begin{array}{rr}
\hat\gamma &\hbox{\;\;}-\hat\delta\\
\hat \rho & \hat\mu
\end{array}
\right)
\left( 
\begin{array}{c}
\epsilon+\delta_{\pertur}(\epsilon)[a,d]\\
v+\delta_{\pertur}(v)[a,d]
\end{array}
\right)+
\left( 
\begin{array}{c}
\overline{Y} +\delta_{\pertur}(\overline{Y})[a,d]\\
\pi^*+\delta_{\pertur}(\pi^*)[a,d]
\end{array}
\right)
\]   
and, by linearity, the result follows.
\end{proof}
%%%%%%%% end proof %%%%%%

In Section \ref{section-IS-MP}  we will  consider the case where only the income $Y$ and the inflation $\pi$  become uncertain.  
For such a case we have the following expression for the perturbed  equilibrium point. 

\begin{lemma}
\label{JustPerturbationsIS-MP}
Consider a perturbation strength model $\pertur$ for the IS-MP model such that, for 
$e\in \mathcal E\setminus\{\overline{Y}, \pi^*\}$, we have  $\delta_\ang(e)=\delta_\dae(e)=0$. For any
joint action $(a,d)$, it holds
$Y(a,d)= Y+\delta_{\pertur}(\overline{Y})[a,d]$ and 
$\pi(a,d)= \pi+\delta_{\pertur}(\pi^*)[a,d]$.
\end{lemma}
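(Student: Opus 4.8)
The plan is to derive this lemma as a direct specialization of Lemma~\ref{IS-MP:lemma-linear}. First I would check that the present hypothesis is strong enough to invoke that earlier result. Here we assume $\delta_\ang(e)=\delta_\dae(e)=0$ for every $e\in\mathcal E\setminus\{\overline{Y},\pi^*\}$; since the exogenous parameters $\mathcal P=\{\alpha,\rho,\phi,\theta_\pi,\theta_Y\}$ satisfy $\mathcal P\subseteq\mathcal E\setminus\{\overline{Y},\pi^*\}$, none of them can be perturbed, which is exactly the hypothesis required by Lemma~\ref{IS-MP:lemma-linear}. Applying that lemma therefore yields, for any joint action $(a,d)$, the matrix expression relating $(Y(a,d),\pi(a,d))$ to $(Y,\pi)$ through the $2\times2$ matrix with entries $\hat\gamma,-\hat\delta,\hat\rho,\hat\mu$ acting on the vector $(\delta_{\pertur}(\epsilon)[a,d],\delta_{\pertur}(v)[a,d])$, plus the shift vector $(\delta_{\pertur}(\overline{Y})[a,d],\delta_{\pertur}(\pi^*)[a,d])$.

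The second step is to observe that the matrix term vanishes. Since $\epsilon,v\in\mathcal E\setminus\{\overline{Y},\pi^*\}$, the hypothesis gives $\delta_\ang(\epsilon)=\delta_\dae(\epsilon)=0$ and $\delta_\ang(v)=\delta_\dae(v)=0$. By Definition~\ref{DefDeltaStress} (equivalently, by Lemma~\ref{lemmaTrivial}), a component whose angel and daemon strengths are both zero satisfies $\delta_{\pertur}(\cdot)[a,d]=0$ regardless of $(a,d)$, so $\delta_{\pertur}(\epsilon)[a,d]=\delta_{\pertur}(v)[a,d]=0$. The matrix product then collapses to the zero vector, and what remains is
\[
\left(
\begin{array}{c}
Y(a,d)\\
\pi(a,d)
\end{array}
\right)=
\left(
\begin{array}{c}
Y\\
\pi
\end{array}
\right)+
\left(
\begin{array}{c}
\delta_{\pertur}(\overline{Y})[a,d]\\
\delta_{\pertur}(\pi^*)[a,d]
\end{array}
\right),
\]
which is precisely the claimed pair of scalar identities $Y(a,d)=Y+\delta_{\pertur}(\overline{Y})[a,d]$ and $\pi(a,d)=\pi+\delta_{\pertur}(\pi^*)[a,d]$.

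There is essentially no hard part here: the whole argument is a specialization in which the strengthened hypotheses annihilate the two entries that Lemma~\ref{IS-MP:lemma-linear} still allowed to vary. The only point that warrants a moment's care is confirming that $\epsilon$ and $v$ indeed belong to the unperturbed set $\mathcal E\setminus\{\overline{Y},\pi^*\}$, so that both their angel and daemon strengths are zero and invoking Definition~\ref{DefDeltaStress} legitimately forces $\delta_{\pertur}(\epsilon)[a,d]=\delta_{\pertur}(v)[a,d]=0$ independently of which sets $a$ and $d$ the agents choose.
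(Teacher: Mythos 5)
Your proposal is correct and follows essentially the same route as the paper: both observe that $\mathcal P\subseteq\mathcal E\setminus\{\overline{Y},\pi^*\}$ so that Lemma~\ref{IS-MP:lemma-linear} applies, and then note that $\delta_{\pertur}(\epsilon)[a,d]=\delta_{\pertur}(v)[a,d]=0$ kills the matrix term. No gaps.
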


%%%%%%% begin proof %%%%%%%%%
\begin{proof}
Observe that   $ \mathcal E\setminus\{\overline{Y}, \pi^*\}= \mathcal P \cup\{\epsilon,v\}$.
As, for  $e\in \mathcal P$, $\delta_\ang(e)=\delta_\dae(e)=0$ we can use  
Lemma~\ref{IS-MP:lemma-linear}. Thus, we get
\[
\left( 
\begin{array}{c}
Y(a,d)\\
\pi(a,d)
\end{array}
\right)=
\left( 
\begin{array}{c}
Y\\
\pi
\end{array}
\right)
+
\left( 
\begin{array}{rr}
\hat\gamma &\hbox{\;\;}-\hat\delta\\
\hat \rho & \hat\mu
\end{array}
\right)
\left( 
\begin{array}{c}
\delta_{\pertur}(\epsilon)[a,d]\\
\delta_{\pertur}(v)[a,d]
\end{array}
\right)+
\left( 
\begin{array}{c}
\delta_{\pertur}(\overline{Y})[a,d]\\
\delta_{\pertur}(\pi^*)[a,d]
\end{array}
\right)
\]   
As  $\delta_\ang(e)=\delta_\dae(e)=0$,  for any $e\in \{\epsilon, v\}$, we have $\delta_{\pertur}(\epsilon)[a,d]=\delta_{\pertur}(v)[a,d]=0$ and the claimed result follows.
\end{proof}
%%%%%%%% end proof %%%%%%

We conclude this section with an example illustrating the details of the computation of the equilibrium point in a perturbed scenario.  

\begin{ex}
\label{ex-stress-IS-MP}
Let us continue with the Example \ref{perturbation-IS-MP} in the IS-MP model. Consider the joint action
$(a,d)=(\{\epsilon\},\{\pi^*,v\})$.
Writing the new values  as 
$e'=\stress(e)[(\{\epsilon\},\{\pi^*,v\}]$. As usual, 
\[\mathcal E'=\mathcal E(\{\epsilon\},\{\pi^*,v\})=\{\alpha',\rho',\phi',\theta'_\pi,\theta'_Y,{\pi^*}', \overline{Y}',\epsilon',v'\}.\]
A sketch of the computation of the perturbed valuation is given in the following table. 
\begin{footnotesize}
%\small{
\[
{\renewcommand{\arraystretch}{1.2}
%\resizebox{\columnwidth}{1cm}{%
\begin{tabular}{||p{0.5cm} ||p{1.5cm} ||p{0.3cm}|p{0.3cm}| p{0.5cm} | p{0.5cm}  |p{0.5cm} || c | c | 
c | p{0.5cm} ||}
%\begin{tabular}{||c ||c  ||c || c | c |c |c | c | c | c | c | c | c | c ||}
\hline
%\rule{0pt}{11pt}
agent & choice   &$\alpha$ & $\rho$ &$\phi$& $\theta_\pi$ & $\theta_Y$ & $\pi^*$ & \rule{0pt}{11pt}$\overline{Y}$ & $\epsilon$ & $v$ \\ 
\hline
 & & $1$ & $2$ & $1/4$ & $1/2$ & $1/2$ & $2$ & $100$ & $1$ &$1/2$ \\
\hline
$\ang$ & $a=\{\epsilon\}$  & &  & & & & &  &$+2$ & \\
$\dae$ & $d=\{\pi^*, v\}$ & &  & & & &$+3$ & &  &$+2$\\
\hline
&    & $\alpha'$ & $\rho'$ & $\phi'$ & $\theta'_\pi$ & $\theta_Y'$ & ${\pi^*}'$ & \rule{0pt}{11pt}$\overline{Y}'$ & $\epsilon'$ &$v'$\\
\hline
&&$1$ & $2$ & $1/4$ & $1/2$ & $1/2$ &5&100&$3$&$5/2$\\
\hline
\end{tabular}
}
%}
\]
%}
\end{footnotesize}

%\noindent
As $\epsilon\in a\setminus d$, according to the Definition \ref{DefDeltaStress} we have
\[\epsilon'=\stress(\epsilon)[(\{\epsilon\},\{\pi^*,v\}]= \epsilon+\delta_{\pertur}[\{\epsilon\},\{\pi^*,v\}]= \epsilon+\delta_\ang(\epsilon) = 3.\]
As $\pi\in d\setminus a$, 
${\pi^*}' = \pi+\delta_\dae(\pi^*) = 5$.  Similarly $v'=v+\delta_\dae(v) = 5/2$. All other values remain unchanged.
In order to obtain  $(Y',{\pi^*}')=(Y(a,d),\pi(a,d))$, by using  Lemma \ref{IS-MP:lemma-linear}, we have
 \[
\left( 
\begin{array}{c}
Y'\\
\pi'
\end{array}
\right)=
\frac{1}{13}
\left( 
\begin{array}{c}
1306\\
34
\end{array}
\right)
+
\frac{1}{13}
\left( 
\begin{array}{rr}
8 &\hbox{\;\;}-4\\
2 & 12
\end{array}
\right)
\left( 
\begin{array}{c}
\delta_\ang(\epsilon)\\
\delta_\dae(v)
\end{array}
\right)+
\left( 
\begin{array}{c}
0\\
\delta_\ang(\pi^*)
\end{array}
\right)
=
\frac{1}{13}
\left( 
\begin{array}{c}
1314\\
101
\end{array}
\right).
\]   
\hfill
\qed
\end{ex}
%
%------------------------END EXAMPLE------------------

%%%%%%%%%%%%%%%%%%%%%%%%%%5
%%%%%% GENERAL A/D games %%%%%%%
%%%%%%%%%%%%%%%%%%%%%

\section{Uncertainty Profiles and \angdae games}
\label{Uncertainty}
In this section, we present the model and tools to analyse a situation under  a given perturbation strength model. Recall that the perturbation strength model fixes  the perturbation to the different exogenous components.
Following \citet{GSS2014-CJ}, we introduce \emph{uncertainty profiles}  as a tool to  provide
an \emph{a priori} (global and macroscopic) view of the macroeconomic  model under perturbation. An uncertainty profile describes a priori situation where we are uncertain over the specific location where the perturbation will impact the system but we have an approximate idea of the extension of the perturbation. Uncertainty profiles are based in three components. The first one  identifies  the   set of exogenous components  that might be  perturbed. The second states the limits in the number of components that can suffer perturbation. The third component quantifies (as function of the exogenous components)  the benefits for the agents. In such a setting,  we are uncertain about the specific subset that will  suffer the perturbation.  
\begin{definition}
\label{uncertainty-macro}
Let $\ecomodel$  be a macroeconomic model having  $\mathcal E$ as  set of exogenous components. Let $S$ be a perturbation strength model for $\ecomodel$.  An  \emph{uncertainty profile}  is a tuple 
$\mathcal U=\langle \mathcal E, \pertur, \mathcal A,\mathcal D,  b_{\ang}, 
b_{\dae}, u_{\ang}, u_{\dae} \rangle$ where
$\mathcal A, \mathcal D \subseteq \mathcal E$. 
The \emph{spread} of the  perturbation  
$b_{\ang}$ and $b_{\dae}$, verify 
$b_{\ang}\le \#\mathcal A$ and
similarly $b_{\dae}\le \#\mathcal D$. 
The exerted perturbation follows from  joint  actions 
$(a,d)$ verifying   
 $a\subseteq \mathcal A$,   $d\subseteq \mathcal D$  with $\# a =b_{\ang}$ and $\# d =b_{\dae}$.
The effects of a joint action  are  measured by  the \emph{utility functions} $u_{\ang}$ and $u_{\dae}$.  Given a joint  action  $(a,d)$,   $u_{\ang}(a,d)$ measures  $\ang$\rq{}s gain while  $u_{\dae}(a,d)$ measures  $\dae$\rq{}s  gain. 
\end{definition}

The analyser has the perception that, when an angelic component belonging to  $\mathcal A$ is perturbed,   it is unlikely to have a malicious  impact.
In contrast, when a daemonic component in $\mathcal D$ is perturbed,  it might well have catastrophic implications. Note that, both $\mathcal A$ and $\mathcal D$ determine the potential parameters to be perturbed. In many cases the analyser do not expect that all of them are perturbed  "at the same time".
Thus, $b_{\ang}$ gives the number of components in $\mathcal A$ that can be perturbed together.  In a similar way $b_{\dae}$ gives the limitations for the $\dae$.
  
Uncertainty profiles provide a flexible analysis tool. Let us consider some cases of interest. 
A really pessimistic and global approach like \rq\rq{}anything can go wrong\rq\rq{} can be modelled taking 
$\mathcal A=\emptyset$ and $\mathcal D = \mathcal E$. Pessimism increases by selecting  bigger values for
$b_{\dae}$. The Murphy's law, 
"anything that can go wrong will go wrong" translates directly as  $b_{\dae}=\# \mathcal D$. When the perturbation of some component  can  go well or go wrong but not both at the same time,  it is enough to assume that $\mathcal A\cap \mathcal D=\emptyset$. When we are uncertain about a component  $e$ and we presume that it could suffer positive and negative influences at the same time, we can take 
$e\in \mathcal A\cap \mathcal D$. A completely optimistic perception  can be modelled through 
$\mathcal A= \mathcal E$, $\mathcal D=\emptyset$. Here the degree of optimism is given by the value of $b_ {\ang}$. 
Setting $\mathcal A=\mathcal D = \mathcal E$ models a situation of maximal uncertainty, the different degrees of perturbation,
on both sides,  are tuned trough the values $b_{\ang}$ and  $b_{\dae}$. 
For instance, 
$b_{\ang} = \#\mathcal E$ and $b_{\dae} = \#\mathcal E$
 forces the $\ang$ and the $\dae$ to act perturbing simultaneously all the components in  $\mathcal E$.

\remove{To isolate the possible joint perturbed actions we define the pairs $(a,d)$ such that
$
(a,d)\in \{a\subseteq {\mathcal A}\;|\; \#a=b_{\ang}\}\times 
\{d\subseteq {\mathcal D}\;|\; \#d=b_{\dae}\}.
$}
The situation described by  an uncertainty profile $\mathcal U$ is analyzed by means of an  associated strategic \angdae game. In such a game $\ang$ and $\dae$ decide their actions strategically.  
For basics on game theory we refer the reader to \citep{Osborne2004,OsRu94}). 
\remove{Given  ${\cal  U}$
a strategic situation 
arises when the model $\ecomodel$ is subjected to combined  angel and daemon 
actions. }
\begin{definition}
Let $\ecomodel$  be a macroeconomic model having  $\mathcal E$ as  set of exogenous components. Let  $\pertur$  be a perturbation strength model for $\ecomodel$. Let  $\mathcal U=\langle \mathcal E, \pertur, \mathcal A,\mathcal D,  b_{\ang}, 
b_{\dae}, u_{\ang}, u_{\dae} \rangle$ be an uncertainty profile.
The associated \emph{angel-daemon} strategic game (the \angdae game) $\Gamma(\mathcal U)$ is defined as $\Gamma(\mathcal U) = \langle \{\ang,\dae\},A_\ang, A_\dae, u_\ang,u_\dae \rangle$. $\Gamma(\mathcal U) $ has two players $\{\ang, \dae\}$. The player\rq{}s actions are 
$A_\ang=\{a\subseteq {\mathcal A}\;|\; \#a=b_{\ang}\}$ and 
$A_\dae=\{d\subseteq {\mathcal D}\;|\; \#d=b_{\dae}\}$.
Their  utilities  are $u_\ang$ and $u_\dae$.
\end{definition}

Notice that, in an \angdae game the set of strategy profiles is $A_\ang\times A_\dae$. Thus strategy profiles are permissible joint  actions. 
Recall that, a \emph{pure Nash equilibrium}  is a strategy profile  such that neither $\ang$ nor $\dae$ can improve the situation by himself \citep{Osborne2004}.  Formally, a strategy profile $(a,d)$ 
is a pure Nash equilibrium if and only if  $u_\ang(a,d)\ge u_\ang (a',d)$, for all 
$a'\in A_\ang$, and $u_\dae(a,d)\ge u_\dae(a,d')$, for all $d'\in A_\dae$. We note by $\PNE(\Gamma)$  the set of pure Nash equilibria of $\Gamma$. When $\Gamma$ is clear from the context we just write $\PNE$.
Given $d\in A_d$ the {\em best response} of $\ang$ to  $\dae$\rq{}s choice $d$ is the set of 
strategies giving to $\ang$ the maximum utility, i.e.,
$B_\ang(d)=  \{a\mid u_\ang(a,d)\ge u_\ang(a',d) \hbox{\; for all\;} a'\in A_\ang\}$.
Similarly, $B_\dae(a)=  \{d\mid u_\dae(a,d)\ge u_\dae(a,d') \hbox{\; for all\;} d'\in A_\dae\}$. It is well known that 
$(a,d)\in \PNE$ if and only if $a\in B_\ang(d)$ and $b\in B_\dae(a)$. Formally,
$\PNE = \{(a,d)\mid a\in B_\ang(d) \hbox{\; and\;} b\in B_\dae(a)\}$.  

In the following we analyze the existence of PNE for some extreme types of uncertainty profiles. 
\begin{lemma}
\label{LemmaNoFreedom}
Let $\mathcal U=\langle \mathcal E, \pertur, \mathcal A,\mathcal D,b_{\ang},b_{\dae}, u_{\ang},u_{\dae}\rangle$ be an uncertainty profile for $\ecomodel$.   
When $b_{\ang} =\#\mathcal A$ and $b_{\dae} =\#\mathcal D$,
the only \PNE of  $\Gamma(\mathcal U)$ is  $(\mathcal A,\mathcal D)$.
When $b_{\ang} = 0$ and $b_{\dae} =0$,
the only \PNE of  $\Gamma(\mathcal U)$ is  
is  $(\emptyset,\emptyset)$.
\end{lemma}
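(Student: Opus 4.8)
The plan is to observe that both hypotheses force the players' action sets to collapse to singletons, after which each claim is immediate. Recall from the definition of $\Gamma(\mathcal U)$ that $A_\ang = \{a \subseteq \mathcal A \mid \#a = b_\ang\}$ and $A_\dae = \{d \subseteq \mathcal D \mid \#d = b_\dae\}$, and that a profile $(a,d)$ is a PNE exactly when $u_\ang(a,d)\ge u_\ang(a',d)$ for all $a'\in A_\ang$ and $u_\dae(a,d)\ge u_\dae(a,d')$ for all $d'\in A_\dae$.

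First I would treat the case $b_\ang = \#\mathcal A$ and $b_\dae = \#\mathcal D$. A subset $a \subseteq \mathcal A$ with $\#a = \#\mathcal A$ must equal $\mathcal A$ itself, since there is exactly $\binom{\#\mathcal A}{\#\mathcal A}=1$ such subset; hence $A_\ang = \{\mathcal A\}$, and symmetrically $A_\dae = \{\mathcal D\}$. Therefore the only strategy profile in the game is $(\mathcal A, \mathcal D)$.

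Then I would verify that this lone profile is a PNE, which holds vacuously: in the best-response inequality $u_\ang(\mathcal A,\mathcal D) \ge u_\ang(a',\mathcal D)$ the only admissible $a'$ is $\mathcal A$ itself, so the inequality reduces to an equality, and likewise for $\dae$. Thus both Nash conditions are satisfied trivially, so $(\mathcal A,\mathcal D)\in\PNE$, and since no other profile exists it is the unique one. The case $b_\ang = 0$, $b_\dae = 0$ is handled identically after replacing ``cardinality $\#\mathcal A$'' by ``cardinality $0$'': the unique subset of $\mathcal A$ of size $0$ is $\emptyset$ (again $\binom{n}{0}=1$), so $A_\ang = A_\dae = \{\emptyset\}$, the only profile is $(\emptyset,\emptyset)$, and the same vacuous argument makes it the unique PNE.

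There is no real obstacle here; the entire content is the elementary combinatorial fact that $\binom{n}{n}=\binom{n}{0}=1$. The only point requiring a modicum of care is to keep separate the two assertions being made, namely the \emph{existence} of the equilibrium (the singleton profile satisfies the best-response conditions vacuously) and its \emph{uniqueness} (there is simply no competing profile to which a player could deviate).
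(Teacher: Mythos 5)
Your proof is correct and follows essentially the same route as the paper: both arguments observe that the hypotheses collapse $A_\ang$ and $A_\dae$ to singletons, so there is a unique strategy profile, which is then vacuously a pure Nash equilibrium. Your version merely spells out the combinatorial fact and the vacuous best-response check a bit more explicitly than the paper does.
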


\begin{proof}
In the first case, the only permissible  joint action is $(\mathcal A, \mathcal D)$,  In the second case  $(\emptyset,\emptyset)$ is the unique permissible joint strategy. Therefore, in both cases, there is only one possible strategy profile  and the result follows. 
\end{proof}
%%%%%%%% end proof %%%%%%

The previous lemma considers two extreme cases where neither $\ang$ nor $\dae$ have freedom to make choices. 
Our next result shows that in an equilibrium,  when one of the two agents cannot act,  the other agent maximizes his utility.

\begin{lemma}
Let $\mathcal U=\langle \mathcal E, \pertur, \mathcal A,\mathcal D,b_{\ang},b_{\dae}, u_{\ang},u_{\dae}\rangle$ be an uncertainty profile for $\ecomodel$.   
When $b_{\dae} =0$ it holds
$\PNE(\Gamma(\mathcal U)) = \{(a,\emptyset)\mid u_{\ang}(a,\emptyset)= \max_{a'\in A_\ang}u_{\ang}(a',\emptyset)\}$.
When $b_{\ang} =0$
it holds
$\PNE(\Gamma(\mathcal U)) = \{(\emptyset,d)\mid u_{\dae}(\emptyset, d)= \max_{d'\in A_\dae}u_{\dae}(\emptyset,d')\}$.
\end{lemma}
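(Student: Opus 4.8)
The plan is to exploit the fact that when one agent's spread parameter is zero, that agent has exactly one permissible action, so the game effectively collapses to a single-player optimization problem. First I would observe that when $b_{\dae}=0$, the constraint $\#d=b_{\dae}=0$ forces $d=\emptyset$, since the empty set is the unique subset of $\mathcal D$ with zero elements. Hence $A_\dae=\{\emptyset\}$, and every strategy profile has the form $(a,\emptyset)$ with $a\in A_\ang$.

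Next I would apply the best-response characterization of pure Nash equilibria stated just before the lemma, namely $\PNE=\{(a,d)\mid a\in B_\ang(d) \text{ and } d\in B_\dae(a)\}$. Since $A_\dae=\{\emptyset\}$, the daemon's best-response condition $d\in B_\dae(a)$ is satisfied trivially by $d=\emptyset$ for every $a$: with only one available action, $\dae$ cannot deviate, so $u_\dae(a,\emptyset)\ge u_\dae(a,d')$ holds vacuously for all $d'\in A_\dae=\{\emptyset\}$. Therefore the only binding requirement is $a\in B_\ang(\emptyset)$, which by the definition of the angel's best response means $u_\ang(a,\emptyset)\ge u_\ang(a',\emptyset)$ for all $a'\in A_\ang$, i.e. $u_{\ang}(a,\emptyset)=\max_{a'\in A_\ang}u_{\ang}(a',\emptyset)$. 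This gives exactly the claimed description of $\PNE(\Gamma(\mathcal U))$.

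The second statement, for $b_{\ang}=0$, follows by the symmetric argument with the roles of $\ang$ and $\dae$ interchanged: now $A_\ang=\{\emptyset\}$, the angel's best-response condition is vacuous, and the equilibria are precisely the profiles $(\emptyset,d)$ maximizing $u_{\dae}(\emptyset,\cdot)$ over $A_\dae$.

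I do not expect any genuine obstacle here; the result is essentially a direct unfolding of the definitions. The only point requiring a moment of care is confirming that $b_{\dae}=0$ really yields a singleton action set rather than an empty one — this relies on $\emptyset\subseteq\mathcal D$ and $\#\emptyset=0$, so $\emptyset$ is always a legitimate member of $A_\dae$ regardless of $\mathcal D$. Once that is noted, the collapse to a one-player maximization and the vacuous satisfaction of the idle agent's best-response condition complete the proof.
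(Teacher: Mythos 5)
Your proof is correct and follows essentially the same route as the paper's: both observe that $b_{\dae}=0$ collapses $A_\dae$ to $\{\emptyset\}$ (and symmetrically for $b_{\ang}=0$) and then reduce the equilibrium condition to a one-sided maximization. The paper states this more tersely ("the claim follows from the definition of \PNE"), whereas you spell out the vacuous best-response condition for the idle agent, which is a useful clarification but not a different argument.
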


\begin{proof}
In the first case we have $A_\ang\times A_\dae=\{(a,\emptyset)\mid a\subseteq \mathcal A \hbox{\;and \;} \#a=b_{\ang} \}$. Symmetrically, in the second case we have  
$A_\ang\times A_\dae=\{(\emptyset,d)\mid d\subseteq \mathcal D \hbox{\;and \;} \#d=b_{\dae} \}$. The claim follows from the definition of \PNE.
\end{proof}

Let us present some examples of \angdae games arising in the analysis of uncertainty profiles for the  IS-LM and the IS-MP models. We have selected them to illustrate some aspects related to the existence of PNE in the associated \angdae game.

%--------- example--------------

\begin{ex}
\label{FirstGame}
Take $\ecomodel=\hbox{IS-LM}$,  the valuation  
$\mathcal E$ given in Example \ref{EasyExample}
and the perturbation strength model  $\pertur$ introduced in Example \ref{perturbation}.
We will define $\mathcal A$ and $\mathcal D$ depending on the uncertain scenario we are interested to describe. 
Consider  a situation in which  we are interested to know how perturbations affects  $Y$ and  $r$ when the marginal propensity to consume  might be perturbed in an angelic way  while the price of goods and taxes might be perturbed in a daemonic way. However, the exogenous government spending might be perturbed in both directions. 
In consequence we set  
$\mathcal A= \{b,G\}$ and $\mathcal D= \{P,G, T\}$.
Assume that  we do not expect perturbations to be exerted at the same time on more than one component. So, we set $b_{\ang}=b_{\dae}=1$. 
We are interested to know how perturbations affects  $Y$ and  $r$. So, we take   $u_{\ang} = Y$ and  we need to precise the interests of $\dae$ with respect to $r$.
In general, a low interest rate $r$ is a mean to achieve higher economic growth and lower unemployment. Now we explore two cases.

First, we want  to catch a ``worst-case'' situation where the perturbation  increases the interest rate.  The dictum, ``having hight interest rate is bad" is captured by setting
$u_{\dae} = r$.  In this case, $\dae$ tries to maximize $r$. The uncertainty profile $\mathcal U_1 = \langle \mathcal E,  \pertur,\{b,G\}, \{P,G, T\}, 1, 2, Y, r\rangle$
mimics an uncertain situation asking  at the same time for a hight income and a high interest rate. 
Second, we consider a ``best-case'' situation. As raising interest rate $r$ has negative effects, we are  interested to know what happens when this is not the case. So, $r$ is considered a dis-utility.  Raising $r$ is seen as a negative fact. Thus we  define   $u_{\dae} = -r$ (note that maximizing $-r$ is the same as minimizing $r$). These considerations lead to the  uncertainty profile 
$\mathcal U_2 = \langle \mathcal E,  \pertur,\{b,G\}, \{P,G, T\}, 1, 2, Y, -r\rangle$.

%---- first case
Let us consider the first case, $\mathcal U_1 = \langle \mathcal E,  \pertur,\{b,G\}, \{P,G, T\}, 1, 2, Y, r\rangle$ and analyze the PNE of the  \angdae game
$\Gamma(\mathcal U_1)$. According to the definitions, the set of actions for the players are  
\begin{eqnarray*}
&&A_\ang = 
\{a\subseteq \{b,G\}\;|\; \#a=1\}=\{\{b\},\{G\}\},\\
&&A_\dae = \{d\subseteq \{P,G, T\}\;|\; \#d=2\}=
\{\{P,G\},\{P,T\},\{T,G\}\}
\end{eqnarray*}
The utilities are $u_\ang = Y$ and $u_\dae= r$ which can be tabulated in the usual bi-matrix form. Recall that a \emph{bi-matrix} is a table with an entry for each strategy profile holding  a pair of values corresponding to the utilities of the two players. 
Computing the utilities of $\ang$ and $\dae$ (according to the methods developed in Example \ref{ExStress}), after applying the perturbation to the selected components,  we get, for example 
$Y(\{b\}, \{P,G\}) \approx 1062.96$
and  $r(\{b\}, \{P,G\})\approx 7.29$. 
The remaining results are summarized in the following bi-matrix representation of $\Gamma(\mathcal U_1)$.

\medskip
{%\renewcommand{\arraystretch}{1.2}
%\begin{tabular}{cc|c|c|c|}
%\scalebox{0.95}
{\begin{tabular}{cc|c|c|c|}
\cline{3-5}
& &\multicolumn{3}{|c| }{$\dae$}\\
\cline{3-5}
   & & $\{P,G\}$    & $\{P,T\}$ & $\{T,G\}$\\
\cline{1-5}
\multicolumn{1}{|c|}{\multirow{2}{*} {$\ang$}} &
\multicolumn{1}{|c|}{$\{b\}$}   
& $1062.96, 7.29$
& $1029.62, 6.962$
& $1233.33, 22/3\approx 7.33$\\
%\cline{1-4} 
\multicolumn{1}{|c|}{}   &
\multicolumn{1}{|c|}{$\{G\}$}   
& $1066.66, 22/3\approx7.33$
& $1041.66, 7.08$
& $1075, 5.75$\\
\cline{1-5} 
\end{tabular}}
}

\bigskip
%\noindent
The sets of  $\ang$\rq{}s best responses to $\dae$\rq{}s actions are:
\[B_\ang(\{P,G\})=\{\{G\}\},
B_\ang(\{P,T\}) = \{\{G\}\},
B_\ang(\{T,G\})=\{\{b\}\}\]
and the sets of $\dae$\rq{}s best responses to $\ang$\rq{}s actions are:
\[B_\dae(\{b\}) = \{\{T,G\}\}, 
B_\dae(\{G\}) = \{\{P,G\}\}
\]
The strategy profile $(\{G\},\{P,G\})$ is a pure Nash equilibrium 
because $\{G\}\in B_\ang(\{P,G\})$ and $\{P,G\}\in B_\dae(\{G\})$.
Similarly, $(\{b\},\{T,G\})\in PNE$.  So, we have 
\[\PNE=\{(\{G\},\{P,G\}), (\{b\},\{T,G\})\}\] 
The utilities for  $(\{G\},\{P,G\})$ are
\begin{eqnarray*}
&&u_{\ang}(\{G\},\{P,G\})=  Y(\{G\},\{P,G\})= 1066.66\\
&&u_{\dae}(\{G\},\{P,G\}) = r(\{G\},\{P,G\})= 22/3\approx 7.33
\end{eqnarray*}
and the utilities for $(\{b\},\{T,G\})$ are 
\begin{eqnarray*}
&&u_{\ang}(\{b\},\{T,G\})=  Y(\{b\},\{T,G\})= 1233.33\\
&&u_{\dae}(\{b\},\{T,G\})= r(\{b\},\{T,G\})= 22/3\approx 7.33
\end{eqnarray*}
Thus, $\Gamma(\mathcal{U}_1)$, has  more than one PNE. Observe that $\ang$ gets different rewards on these PNE.  

%---- second case

In the second case we have 
$\mathcal U_2 = \langle \mathcal E,  \pertur,\{b,G\}, \{P,G, T\}, 1, 2, Y, -r\rangle$. The corresponding \angdae game
$\Gamma(\mathcal U_2)$ is described by the following bi-matrix form. 

\medskip
%{\renewcommand{\arraystretch}{1.2}
%\begin{tabular}{cc|c|c|c|}
%\scalebox{0.95}
{\begin{tabular}{cc|c|c|c|}
\cline{3-5}
& &\multicolumn{3}{|c| }{$\dae$}\\
\cline{3-5}
   & & $\{P,G\}$    & $\{P,T\}$ & $\{T,G\}$\\
\cline{1-5}
\multicolumn{1}{|c|}{\multirow{2}{*} {$\ang$}} &
\multicolumn{1}{|c|}{$\{b\}$}   
& $1062.96, -7.29$
& $1029.62, -6.962$
& $1233.33, -22/3\approx -7.33$\\
%\cline{1-4} 
\multicolumn{1}{|c|}{}   &
\multicolumn{1}{|c|}{$\{G\}$}   
& $1066.66, -22/3\approx-7.33$
& $1041.66, -7.08$
& $1075, -5.75$\\
\cline{1-5} 
\end{tabular}
}

\bigskip
%\noindent
As $\ang$ tries to maximize the outcome $Y$, their sets of  best responses are the same as in 
$\Gamma(\mathcal U_1)$. As $\dae$ tries to maximize $-r$ (minimize $r$), their best responses are $B_\dae(\{b\}) = \{\{P,T\}\}$ and 
$B_\dae(\{G\}) = \{\{T,G\}\}$. Observe that in this case $\PNE=\emptyset$. 
As there is no \PNE, on any joint strategy, one of the agents (or both), can unilaterally improve his situation. For instance, suppose that 
initially the system is in  $(\{b\},\{P,G\})$. In this case, $\dae$ has an incentive to change  strategy 
as  $B_\dae(\{b\}) = \{\{P,T\}\}$. Thus,  moving to $(\{b\},\{P,T\})$. This is denoted as $(\{b\},\{P,G\})\buildrel \dae\over \longrightarrow
(\{b\},\{P,T\})$. Then $\ang$ can improve also his situation. This process  
give rise to an unstable never-ending behaviour.
\begin{eqnarray*}
&&(\{b\},\{P,G\})\buildrel \dae\over \longrightarrow
(\{b\},\{P,T\}) \buildrel \ang\over \longrightarrow
(\{G\},\{P,G\})\buildrel \dae\over \longrightarrow\\
&&\qquad (\{G\},\{T,G\})\buildrel \ang\over \longrightarrow
(\{b\},\{T,G\})\buildrel \dae\over \longrightarrow
(\{b\},\{P,T\}) \buildrel \ang\over \longrightarrow \cdots
\end{eqnarray*}
We have considered two cases dealing with opposite interests on $r$. Many other cases might be of interest, for example when $\ang$ tries to minimize $Y$ and $\dae$ tries to maximize 
$r$. In any case, the analyser should transform a perception about uncertainty into one (or several) uncertainty profiles. 
\hfill$\Box$
\end{ex}
\begin{ex}
\label{IS-MP-game}
Let us continue with the Examples \ref{perturbation-IS-MP} and \ref{ex-stress-IS-MP}  of the  IS-MP model.  We  
consider the uncertainty profiles where $u_{\ang}=Y$ and analyze the situations in which $\dae$ has opposite interests in the  inflation $\pi$.
We consider again two  uncertainty profiles in which the agents have a narrow spread of perturbation: $\mathcal U_1 = \langle \mathcal E,\pertur,  \{\epsilon, \overline{Y}\}, 
\{v,\pi^*\}, 1,1, Y, \pi\rangle$ and 
$\mathcal U_2 = \langle \mathcal E,\pertur,  \{\epsilon, \overline{Y}\}, 
\{v,\pi^*\}, 1,1, Y, -\pi\rangle$.
 
\remove{$\mathcal U = \langle \mathcal E,\pertur,  \{\epsilon, \overline{Y}\}, 
\{v,\pi^*\}, 1,1, Y, \dots\rangle$. Such profiles model  situations in which $\ang$ tries to maximize the income $Y$ (high $Y$ is good). Then $u_{\ang}=Y$.
We associate the inflation $\pi$ with $\dae$. 
We can assume that  high inflation is bad, but this is  not necessarily the case. 
Of course, galloping inflation makes the economy less stable. 
However, an increase in the rate of inflation leads to a decrease in 
the real cost of capital, thus stimulating borrowing and investment. We deal with both cases separately. Taking 
$\mathcal U_1 = \langle \mathcal E,\pertur,  \{\epsilon, \overline{Y}\}, 
\{v,\pi^*\}, 1,1, Y, \pi\rangle$ we deal with the case where high inflation is bad ($\dae$ tries to maximize $\pi$). In
$\mathcal U_2 = \langle \mathcal E,\pertur,  \{\epsilon, \overline{Y}\}, 
\{v,\pi^*\}, 1,1, Y, -\pi\rangle$, the inflation $\pi$ is a dis-utility ($\dae$ tries to minimize $\pi$).}

Computing the utilities as in Example \ref{ex-stress-IS-MP}, 
for instance $Y(\{\epsilon\},\{v\}) =13014/13$ and $\pi(\{\epsilon\},\{v\}) =62/13$, we get the following bi-matrix form for  $\Gamma (\mathcal U_1)$.

\medskip
{\renewcommand{\arraystretch}{1.2}
\begin{tabular}{cc|c|c|}
\cline{3-4}
& &\multicolumn{2}{|c| }{$\dae$}\\
\cline{3-4}
   & & $\{v\}$    & $\{\pi^*\}$\\
\cline{1-4}
\multicolumn{1}{|c|}{\multirow{2}{*} {$\ang$}} &
\multicolumn{1}{|c|}{$\{\epsilon\}$}   
& $1314/13, 62/13$
& $1322/13,77/13$\\
\cline{3-4}
\multicolumn{1}{|c|}{}   &
\multicolumn{1}{|c|}{$\{\overline{Y}\}$}   
& $1623/13, 58/13$
& $1631/13,73/13$\\
\cline{1-4}
\end{tabular}
}

\medskip
%\noindent
As before,  we compute  the sets of best responses:
$B_\ang(\{v\})=B_\ang(\{\pi^*\})=\{\{\overline{Y}\}\}$
and
$B_\dae(\{\epsilon\})=B_\dae(\{\overline{Y}\})=\{\{\pi^*\}\}$.
Observe that the game has a unique pure Nash equilibrium. Thus, $\PNE=\{(\{\overline{Y}\},\{\pi^*\})\}$.

For $\mathcal U_2 = \langle \mathcal E,\pertur,  \{\epsilon, \overline{Y}\}, 
\{v,\pi^*\}, 1,1, Y, -\pi\rangle$ the $\ang/\dae$ game $\Gamma (\mathcal U_2)$ is described as follows.

\medskip
{\renewcommand{\arraystretch}{1.2}
\begin{tabular}{cc|c|c|}
\cline{3-4}
& &\multicolumn{2}{|c| }{$\dae$}\\
\cline{3-4}
   & & $\{v\}$    & $\{\pi^*\}$\\
\cline{1-4}
\multicolumn{1}{|c|}{\multirow{2}{*} {$\ang$}} &
\multicolumn{1}{|c|}{$\{\epsilon\}$}   
& $1314/13, -62/13$
& $1322/13,-77/13$\\
\cline{3-4}
\multicolumn{1}{|c|}{}   &
\multicolumn{1}{|c|}{$\{\overline{Y}\}$}   
& $1623/13, -58/13$
& $1631/13,-73/13$\\
\cline{1-4}
\end{tabular}
}

\bigskip

The sets of $\dae$\rq{}s  best responses are
$B_\dae(\{\epsilon\})=B_\dae(\{\overline{Y}\})=\{\{v\}\}$. So, 
$\PNE=\{(\{\overline{Y}\},\{v\})\}$.
\hfill
\qed
\end{ex}

%%%%%%%%%%%%%%
%---- end example ------
%%%%%%%%%%%%%%
%
In view of the previous examples an important question is whether we can characterize uncertainty profiles with respect to the existence of \PNE in the associated \angdae game. In the following section we provide such an answer for particular cases generalizing the situations presented in the previous examples. Now we turn our attention to the more general notion of \emph{mixed Nash equilibrium}.

\remove{having no, u \PNE or a unique \Pthe  responses appearing in Example 
\ref{FirstGame} can be  generalised to any numerical example. This is not straightforward because, as  $b\in \mathcal A$ the value of $g=(1-b)f$ is modified in $\pertur$ into a new $g'$ and the factorization 
\[
\left( 
\begin{array}{c}
Y(a,d)\\
r(a,d)
\end{array}
\right)=
\left( 
\begin{array}{c}
Y\\
r
\end{array}
\right)+ \cdots
\]
is not easy because the $g'$ appears as a quotient of both, $Y(a,d)$ and
$r(a,d)$ and the linear structure is broken. When $\mathcal A=\mathcal D=\{G,T\}$, Lemma \ref{EasyFiscalPolicies}  applies and we get a general result in Theorem \ref{ThmFiscalPolicy}.
Taking into account Lemma \ref{JustPerturbationsIS-MP}, part of the results given in Exercise \ref{IS-MP-game} are generalized in Theorem \ref{IS-MP-Thm1}.}

It is well know that although a strategic game might not have a PNE, it always has a  Nash equilibrium on mixed strategies \citep{OsRu94}.  So,  from the  \angdae game, the stable perturbed situations are described by the mixed strategies of the players in the  Nash equilibria. 
A \emph{mixed} strategies for a  player is a probability distribution on its set of actions. Thus in an \angdae game, 
a \emph{mixed strategy profile} is a tuple $(\alpha,\beta)$ where $\alpha: A_\ang\to [0,1]$ and $\beta: A_\dae\to [0,1]$ are probability distributions.
The utility for player $\agent\in\{\ang,\dae\}$  of a mixed strategy profile $(\alpha,\beta)$ is defined as
\[u_\agent(\alpha,\beta) = \sum_{(a,d)\in A_\ang\times A_\dae}\alpha(a)\beta(d) u_\agent(a,d)\]
The following example illustrates the notion of mixed strategy.
\begin{ex} We provide an example of mixed strategies, for the game  $\Gamma(\mathcal U_1)$ defined in
Example \ref{FirstGame}. Let $(\alpha,\beta)$ a mixed strategy for $\Gamma(\mathcal U_1)$. 

As $A_\ang = \{\{b\},\{G\}\}$ we can describe  $\alpha$ as a function that assigns probability $x$ to $\{b\}$ and probability $1-x$ to $\{G\}$, for some $0\leq x\leq 1$. We represent such a function as 
\[\alpha =\big(\alpha(\{b\}),\alpha(\{G\})\big) = (x, 1-x) \hbox{\;with\;} 0\le x\le 1\] 
As $A_\dae = \{\{P,G\},\{P,T\},\{T,G\}\}$ we can describe a probability distribution on  $A_\dae$ as
\begin{eqnarray*}
&&\beta=\big(\beta(\{P,G\}),\beta(\{P,T\}),\beta(\{T,G\})\big)\\
&&\qquad\qquad=(y, z, 1-y-z) \hbox{\;with\;} 0\le y,z\le 1  \hbox{\; and\;} y+z\le 1
\end{eqnarray*}
We can represent a generic mixed strategy profile using the bi-matrix form of $\Gamma(\mathcal U_1)$ and adding  the probabilities of each action.

\bigskip
{\renewcommand{\arraystretch}{1.2}
%\begin{tabular}{cc|c|c|c|}
\begin{tabular}{cc|c|c|c|}
\cline{3-5}
& &\multicolumn{3}{|c| }{$\dae$}\\
\cline{3-5}
   & & $\{P,G\}$    & $\{P,T\}$ & $\{T,G\}$\\
 & & $y$    & $z$ & $1-y-z$\\
\cline{1-5}
\multicolumn{1}{|c|}{\multirow{2}{*} {$\ang$}} &
\multicolumn{1}{|c|}{$\{b\}$\qquad $x$}   
& $1062.96, 7.29$
& $1029.62, 6.962$
& $1233.33, 22/3\approx 7.33$\\
%\cline{1-4} 
\multicolumn{1}{|c|}{}   &
\multicolumn{1}{|c|}{$\{G\}$\quad $1-x$}   
& $1066.66, 22/3\approx 7.33$
& $1041.66, 7.08$
& $1075, 5.75$\\
\cline{1-5} 
\end{tabular}
}

\bigskip
The utility, for $\agent\in\{\ang,\dae\}$, is computed as
\begin{eqnarray*}
u_{\agent}(\alpha,\beta) 
&=&\alpha(\{b\})\beta(\{P,G\})u_{\agent}(\{b\},\{P,G\}) +\cdots\\
&&\cdots +\alpha(\{G\})\beta(\{T,G\})u_{\agent}(\{G\},\{T,G\}) 
\end{eqnarray*}
In particular setting  $x= 1/3$, $y=1/4$ and $z=1/2$. The strategy profile is  
$(\alpha,\beta)=\big((1/3, 2/3), (1/4, 1/2,1/4)\big)$ and the utilitities for the two players are
\begin{eqnarray*}
u_{\ang}(\alpha,\beta) 
&=&\frac{1}{3}\times\frac{1}{4}\times 1062.96 +\cdots=1067.124\\
u_{\dae}(\alpha,\beta) 
&=&\frac{1}{3}\times\frac{1}{4}\times 7.29 +\cdots= 6.9195
\qquad\qquad\qquad\qquad\qquad\qquad\qquad\qquad\qquad\Box 
\end{eqnarray*}
\end{ex}
Let $\Delta_\ang$ and $\Delta_\dae$ denote the set of mixed strategies for  players $\ang$ and $\dae$, respectively. 
A pure  strategy profile $(a, d)$ is a special case of a mixed strategy profile $(\alpha,\beta)$ 
in which $\alpha(a)=1$ and $\beta(d)=1$. 
The definition of Nash equilibrium extends the conditions for pure strategies to mixed strategies.
We adapt to  $\ang/\dae$ games the characterization of a mixed Nash equilibrium given in  \citep{Osborne2004} that we will use to prove that some mixed strategies are Nash equilibria. 

\begin{property}\label{mixedNash}
A mixed strategy profile $(\alpha, \beta)$ is a mixed Nash equilibrium 
in $\Gamma(\mathcal U)$ if  the following two symmetric conditions hold.
For all $a\in A_\ang$, when $\alpha(a)>0$ we have 
$u_{\ang}(\alpha,\beta)= u_{\ang}(a,\beta)$, otherwise  $u_{\ang }(\alpha,\beta)\ge u_{\ang}(a,\beta)$.
For all  $d\in A_\dae$, when $\beta(d)>0$ we have
$u_{\dae}(\alpha,\beta)= u_{\dae}(\alpha,d)$, otherwise $u_{\dae}(\alpha,\beta)\ge u_{\dae}(\alpha,d)$.
\end{property}

In the following example  we use this characterization to obtain a mixed Nash equilibrium.

%--- example mixed Nash quilibrium

\begin{ex}
\label{ExampleMixedNash}
Let us continue with the Example \ref{FirstGame}.  We have seen that the $\ang/\dae$ game corresponding to 
$\mathcal U_2 = \langle \mathcal E,  \pertur,\{b,G\}, \{P,G, T\}, 1, 2, Y, -r\rangle$
has no \PNE. Let us compute a mixed Nash equilibrium. 
An strategy $\alpha\in \Delta_\ang$ can be described as
$\alpha = (\alpha(\{b\}), \alpha(\{G\}))=(x, 1-x)$ with $0\le x\le 1$.
As $\dae$ is rational,  $\dae$ will never choose 
$\{P,G\}$ because, independently of  $\ang$\rq{}s action,  $\{P,T\}$ or $(\{T,G\})$ provide better utility. Therefore, according to Property \ref{mixedNash}, $\beta(\{P,G\})=0$ in any  Nash equilibria $(\alpha,\beta)$. Therefore, we can restrict our search to  $\beta\in \Delta_\dae$ having the form
\[\beta=(\beta(\{P,G\}), \beta(\{P,T\}), \beta(\{T,G\}))=(0, y, 1-y),\] 
for  $0\le y\le 1$.  Using property \ref{mixedNash} we can express the $\dae$ utilities and the conditions for such an  $(\alpha,\beta)$ to be a mixed Nash equilibrium.
\begin{align*}
u_{\dae}(\alpha, \{P,G\}) &
=- x\times 7.29 - (1-x)\times\frac{22}{3}
= 0.04333333\times x - 7.333333\\
u_{\dae}(\alpha, \{P,T\}) &
=- x\times 6.962  - (1-x)\times 7.08
=0.118\times x- 7.08\\
u_{\dae}(\alpha, \{T,G\}) &
=-x\times\frac{22}{3} - (1-x)\times 5.75
= - 1.583333\times x - 5.75\\
u_{\dae}(\alpha, \{P,G\}) &\le 
u_{\dae}(\alpha, \{P,T\})=
u_{\dae}(\alpha, \{T,G\})
\end{align*}
The equality  $u_{\dae}(\alpha, \{P,T\})=
u_{\dae}(\alpha, \{T,G\})$ determines  $\alpha =(0.78174, 0.21826)$

To obtain  the value of $y$ we state the utilities and conditions for $\ang$.
\begin{align*}
u_{\ang}(\{b\},\beta) &
= y\times  1029.62 +(1-y)\times  1233.33
= -203.71\times y +  1233.33\\
u_{\ang}(\{G\},\beta) &= 
y\times 1041.66 +(1-y)\times  1075 =   -33.34\times y+ 1075
\end{align*}
To get a Nash equilibrium we need $u_{\ang}(\{b\},\beta) = u_{\ang}(\{G\},\beta)$ then 
\[\beta =(0, 0.9293303,  0.07066972).\]

%---ang side
%

Therefore,   
$(\alpha, \beta)=( (0.78174, 0.21826)  , (0, 0.9293303,  0.07066972)$ is a mixed Nash equilibrium for $\Gamma (\mathcal U_2)$.
Finally, we can compute the players\rq{} utilities in such an equilibrium:
\begin{align*}
u_{\ang}(\alpha,\beta)  & = x\times y\times 1029.62+x\times (1-y)\times   
 1233.33 +(1-x)\times y \times 1041.66\\
& \quad + (1-x)\times(1-y)\times 1075 =1044.016\\
u_{\dae}(\alpha,\beta) =&- x*y*6.962-x*(1-y)*(22/3)
-(1-x)*y*7.08\\
& \quad - (1-x)*(1-y)*5.75=
-6.98775 
\end{align*}
\vskip -20pt \ \hfill $\Box$
%\qquad\qquad\qquad\qquad\qquad\Box\]
\end{ex}
As we have seen in the previous examples games can have more than one Nash equilibrium and  get different utilities in those situations. 
This property does not happen in the case of two players \emph{zero-sum games} \citep{vonNeumannMorgenstern}.  In a zero-sum game, the sum of the utilities of both players is always zero. A zero-sum game  can have several Nash equilibria but all of them provide the same utility for the players. As all the Nash equilibria have the same utility for the first player, this utility is called the \emph{value of the game}. 

To obtain  a zero-sum \angdae game, we need   uncertainty profiles where $u_{\dae} + u_{\ang} = 0$. We can model this situation by considering a unique objective function $u$ so that $u_{\ang} = u$ and $u_{\dae} = -u$. In such a case the angelic and daemonic interests go exactly in opposite directions.
In a zero-sum situation we simplify notation. In an uncertainty profile we introduce only the function $u$:  
$\mathcal U =\langle \mathcal E, \pertur,\mathcal A, \mathcal D, b_{\ang}, b_{\dae}, u\rangle$.
The description of the corresponding zero-sum \angdae is done by providing  the table with the $u$ values, replacing the usual bi-matrix form. 
The value of the game $\Gamma(\mathcal U)$ is denoted by  $\nu(\mathcal U)$ and it can be expressed as
\[
\nu(\mathcal U)=
\min_{\alpha \in\Delta\ang}\max_{\beta\in\Delta_\dae}
u(\alpha, \beta) =
\max_{\beta\in\Delta_\dae}\min_{\alpha\in\Delta_\ang}u(\alpha, \beta)
\]

One natural way to get a zero-sum situation associated to a given  uncertainty profile  
$\mathcal U =\langle \mathcal E, \pertur,\mathcal A, \mathcal D, b_{\ang}, b_{\dae}, u_{\ang},u_{\dae}\rangle$, is to consider a strategy profile  $(a,d)$
as a perturbation of the strategy profile $(\emptyset,\emptyset)$.  One way of doing so is to consider the uncertainty profile  
$\mathcal U\rq{} =\langle \mathcal E, \pertur,\mathcal A, \mathcal D, b_{\ang}, b_{\dae}, u\rangle$ where 
\[u(a,d)= u_{\ang}(a,d) -\frac{u_{\ang}(\emptyset,\emptyset)}{u_{\dae}(\emptyset,\emptyset)}u_{\dae}(a,d). \]

In the following examples we  analyze the values of such games derived  from  uncertainty profiles on the IS-LM and the IS-MP models considered before.

\begin{ex}
Consider the uncertainty profile  $\mathcal U=\langle \mathcal E,\pertur, \{b,G\}, \{P,G, T\}, 1,2, u\rangle$ with $u(a,d)=Y(a,d)-\frac{Y(\emptyset,\emptyset)}{r(\emptyset,\emptyset)}r(a,d)$ for the IS-LM model derived from the uncertainty profile considered in Example \ref{Mankiw}.
Recall that the system\rq{}s solution is $(Y,r)=(1100,6)$. According to Lemma \ref{lemmaTrivial}, when 
$(a,d)=(\emptyset,\emptyset)$ it holds $(Y(\emptyset,\emptyset),r(\emptyset,\emptyset))=(Y,r)=(1100,6)$
therefore $u(a,d)=Y(a,d)-\frac{1100}{6} r(a,d)$. Simplifying $u(a,d)= Y(a,d)-\frac{550}{3}r(a,d)$.
The  corresponding zero-sum \angdae game $\Gamma(\mathcal U)$ is described by the table of $u$: 
\remove{\bigskip
{\renewcommand{\arraystretch}{1.2}
%\begin{tabular}{cc|c|c|c|}
\begin{tabular}{cc|c|c|c|}
\cline{3-5}
& &\multicolumn{3}{|c| }{$\dae$}\\
\cline{3-5}
   & & $\{P,G\}$    & $\{P,T\}$ & $\{T,G\}$\\
\cline{1-5}
\multicolumn{1}{|c|}{\multirow{2}{*} {$\ang$}} &
\multicolumn{1}{|c|}{$\{b\}$}   
& $u(\{b\},\{P,G\})$, $-u(\{b\},\{P,G\})$
& $\cdots$
&$\cdots$\\
%\cline{1-4} 
\multicolumn{1}{|c|}{}   &
\multicolumn{1}{|c|}{$\{G\}$}   
& $\cdots$
& $\cdots$
& $\cdots$\\
\cline{1-5} 
\end{tabular}
}

\bigskip
\noindent
where 
$u_{\ang}(a,d)= u(a,d)$ and $u_{\dae}(a,d)= -u(a,d)$. In order to avoid redundant information, as usual in zero-sum games, in the following table  
we give only angel's utility, $u_{\ang}= u$. As daemon's utility is just $-u$
we do not write it explicitly summarizing into

\bigskip
{\renewcommand{\arraystretch}{1.2}
%\begin{tabular}{cc|c|c|c|}
\begin{tabular}{cc|c|c|c|}
\cline{3-5}
& &\multicolumn{3}{|c| }{$\dae$}\\
\cline{3-5}
   & & $\{P,G\}$    & $\{P,T\}$ & $\{T,G\}$\\
\cline{1-5}
\multicolumn{1}{|c|}{\multirow{2}{*} {$\ang$}} &
\multicolumn{1}{|c|}{$\{b\}$}   
& $u(\{b\},\{P,G\})$
& $\cdots$
&$\cdots$\\
%\cline{1-4} 
\multicolumn{1}{|c|}{}   &
\multicolumn{1}{|c|}{$\{G\}$}   
& $\cdots$
& $\cdots$
& $\cdots$\\
\cline{1-5} 
\end{tabular}
}

\bigskip
\noindent
and computing explicitly the utilities we obtain $\Gamma(\mathcal U)$:}

\bigskip
\begin{footnotesize}
{\renewcommand{\arraystretch}{1.2}
\begin{tabular}{cc|c|c|c|}
\cline{3-5}
& &\multicolumn{3}{|c| }{$\dae$}\\
\cline{3-5}
   & & $\{P,G\}$    & $\{P,T\}$ & $\{T,G\}$\\
\cline{1-5}
\multicolumn{1}{|c|}{\multirow{2}{*} {$\ang$}} &
\multicolumn{1}{|c|}{$\{b\}$} 
&$-22250/81 \approx-274.69$   
&$-20000/81\approx -246.91$  
&$-1000/9\approx -111.11$\\
%\cline{1-4} 
\multicolumn{1}{|c|}{}   &
\multicolumn{1}{|c|}{$\{G\}$}  
&$-2500/9\approx -277.77$   
&$-4625/18\approx -256.94$ 
& $125/6\approx 20.83$\\
\cline{1-5} 
\end{tabular}
}
\end{footnotesize}

\bigskip
Observe that there is one \PNE at $(\{b\},\{P,G\})$, therefore we know that all \NE will provide the same utility and that  $\nu(\mathcal U)=-22250/81$.
\hfill $\Box$
\end{ex}

\begin{ex} Let us reconsider the 
Example \ref{IS-MP-game}.
This leads to the zero-sum e uncertainty profile 
$\mathcal U = \langle \mathcal E,\pertur,  \{\epsilon, \overline{Y}\}, 
\mathcal D=\{v,\pi^*\}, 1,1, u\rangle$ where  $u(a,d)=Y(a,d)-(Y(\emptyset,\emptyset)/\pi(\emptyset,\emptyset))\pi(a,d)$. According to 
Example \ref{first-IS-MP},  $Y(\emptyset,\emptyset)= 1306/13$ and $\pi(\emptyset,\emptyset)=34/13$. In this case $\Gamma(\mathcal U)$ is described by the following table.

\medskip
{\renewcommand{\arraystretch}{1.2}
\begin{tabular}{cc|c|c|}
\cline{3-4}
& &\multicolumn{2}{|c| }{$\dae$}\\
\cline{3-4}
   & & $\{v\}$    & $\{\pi^*\}$\\
\cline{1-4}
\multicolumn{1}{|c|}{\multirow{2}{*} {$\ang$}} &
\multicolumn{1}{|c|}{$\{\epsilon\}$}   
& $-1396/17\approx -82.11$
& $-2139/17\approx -125.82$\\
\cline{3-4}
\multicolumn{1}{|c|}{}   &
\multicolumn{1}{|c|}{$\{\overline{Y}\}$}   
& $-791/17\approx -46.52$
& $-1534/17\approx -90.23$\\
\cline{1-4}
\end{tabular}
}

\bigskip
The sets of best responses are
$B_\ang(\{v\})=B_\ang(\{\pi*\})=\{\{\overline{Y}\}\}$
and
$B_\dae(\{\epsilon\})=B_\dae(\{\overline{Y}\})=\{\{\pi^*\}\}$.
There is a  \PNE   $(\{\overline{Y}\},\{\pi^*\})$ and therefore 
the value of the game is $\nu(\mathcal U)= -1534/17\approx -90.23$.
\hfill\qed
\end{ex}
%
%In the following sections we analyse the properties of some special types of uncertainty profiles and $\ang/\dae$ games for the IS-ML and IS-MP models.  

%%%%%%%%%%%%%%%%%%%%%%%%%%
%%%%%%%%%% section IS-LM model %%%%%%
%%%%%%%%%%%%%%%%%%%%%%%%%%

\section{Uncertainty in the IS-LM Model}
\label{section-IS-LM}
We consider the  case in which a perturbation  is exerted only on  the  fiscal policy, i.e., $\mathcal A=\mathcal D=\{G,T\}$. Furthermore, we assume also that $\ang$ and $\dae$ can control just one of the components.
We analyze two situations with respect to the utilities. In the first case $\ang$'s objective is  to increase the income as much as possible while $\dae$ tries to increase the interest rate.    In the second case we considerer a zero-sum approach  taking 
$u=Y-k \, r$.

%--- Subsection on  FISCAL POLICY
\subsection{A Case of Fiscal Policy under Uncertainty in the IS-LM Model}
\label{Fiscal policy}
Consider the case where  $\ang$ and $\dae$ have the capability to act over $T$ and $G$, that is  $A_\ang=A_\dae=\{\{T\},\{G\}\}$ and 
$u_\ang = Y$ and $u_\dae=r$ (as it was the case in Example \ref{FirstGame}).
We ask if the addition of  uncertainty can generate a  situation in which no \PNE exists. We  provide a negative answer.  Before stating the result. 
Let us remind, in the context of \angdae games, the definition of {\em dominant strategy equilibrium} (\DSE)  taken from \citep{OsRu94}.
A  \DSE is a joint action  $(a,d)$ such that, for any other joint action $(a',d')\in A_\ang\times A_\dae$,  
$u_\ang(a,d')\ge u_\ang(a',d')$ 
and 
$u_\dae(a',d)\ge u_\dae(a',d')$. 
As pointed  out in \citep{OsRu94},  in a \DSE,  
the action of \emph{every} player is a best response independently on which are the actions taken by the other players. 
%It is not just the equilibrium actions of the other players as in the Nash equilibrium.
%
Our next result proves the existence of \DSE when the perturbation has to be exerted only in one component
\begin{theorem}
\label{ThmFiscalPolicy}  Let $\pertur$ be a perturbation strength model, for the IS-LM model, given by 
\[
{\renewcommand{\arraystretch}{1.2}
\begin{tabular}{|| c || c |c|c|c||}
\hline
agent &$a, b, c, d, e, f$ & $T$ &  $G$ & $M, P$\\
\hline
$\ang$ &$0$&  $\delta_\ang(T)$  & $\delta_\ang(G)$ &$0$\\
$\dae$ &$0$ &  $\delta_\dae(T)$ & $\delta_\dae(G)$  & $0$\\
\hline
\end{tabular}
}
\]  
For the uncertainty profile $\mathcal U=\langle \mathcal E,   \pertur, \{G,T\}, \{G,T\}, 1, 1, Y, r\rangle$,  the associated \angdae game $\Gamma(\mathcal U)$ in bi-matrix form is given by

\bigskip
{\renewcommand{\arraystretch}{1.5}
\begin{tabular}{cc|c|c|c}
\cline{3-4}
& &\multicolumn{2}{|c| }{$\dae$}\\
\cline{3-4}
   & & $\{T\}$    & $\{G\}$ \\
\cline{1-4}
\multicolumn{1}{|p{0.02cm}|}{\multirow{4}{*} {$\ang$}} 
&\multicolumn{1}{|p{0.45cm}|}{\multirow{2}{*} {$\{T\}$}} 
 &\rule{0pt}{11pt}$u_\ang=Y-\frac{f}{g}b(\delta_\ang(T)+\delta_\dae(T))$ 
& $u_\ang=Y+\frac{f}{g}(\delta_\dae(G)-b\delta_\ang(T))$\\
\multicolumn{1}{|c|}{}   
& &$u_\dae=r-\frac{e}{g}b(\delta_\ang(T)+\delta_\dae(T))$
& $ u_\dae=r+\frac{e}{g}(\delta_\dae(G)-b\delta_\ang(T))$\\   
\cline{2-4}
\multicolumn{1}{|p{0.02cm}|}{}  
&\multicolumn{1}{|p{0.01cm}|}{\multirow{2}{*} {$\{G\}$}} 
&\rule{0pt}{11pt}$u_\ang=Y+ \frac{f}{g}(\delta_\ang(G)-b\delta_\dae(T))$ & 
$u_\ang=Y+ \frac{f}{g}(\delta_\ang(G)+\delta_\dae(G))$ \\
\multicolumn{1}{|c|}{}   
&&$u_\dae= r+\frac{e}{g}(\delta_\ang(G)-b\delta_\dae(T))$
&$u_\dae= r+\frac{e}{g}(\delta_\ang(G)+\delta_\dae(G))$\\
\cline{1-4}
\end{tabular}
}

\bigskip
%\noindent
Furthermore, it holds that $\Gamma(\mathcal U)$ has always  a \DSE.  
\end{theorem}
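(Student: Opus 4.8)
The plan is to establish the bi-matrix first and then read a dominant strategy for each agent directly off its structure. For the bi-matrix I would invoke Lemma~\ref{EasyFiscalPolicies}, which under the present $\pertur$ (only $G,T$ perturbable) gives $Y(a,d)=Y+\tfrac{f}{g}\,\delta_{\pertur}(G,T)[a,d]$ and $r(a,d)=r+\tfrac{e}{g}\,\delta_{\pertur}(G,T)[a,d]$, with $\delta_{\pertur}(G,T)[a,d]=\delta_{\pertur}(G)[a,d]-b\,\delta_{\pertur}(T)[a,d]$. Evaluating $\delta_{\pertur}(G,T)[a,d]$ on each of the four profiles in $A_\ang\times A_\dae=\{\{T\},\{G\}\}^2$ via Definition~\ref{DefDeltaStress} reproduces the stated table; for instance on $(\{T\},\{T\})$ one has $T\in a\cap d$ and $G\notin a\cup d$, so $\delta_{\pertur}(G,T)=-b(\delta_\ang(T)+\delta_\dae(T))$, matching the top-left entry. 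This part is a routine four-case check.

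The substantive step exploits two facts. First, since $0<b<1$ and all parameters are positive, $g=(1-b)f+de>0$, so $f/g$ and $e/g$ are \emph{strictly positive}; hence $u_\ang=Y+\tfrac{f}{g}\delta$ and $u_\dae=r+\tfrac{e}{g}\delta$ are both strictly increasing affine functions of the single scalar $\delta:=\delta_{\pertur}(G,T)[a,d]$, i.e. \emph{both} agents prefer a larger $\delta$. Second, because in Definition~\ref{DefDeltaStress} each component's perturbation is additive in the two agents' contributions (the cross case $e\in a\cap d$ contributing the \emph{sum} $\delta_\ang(e)+\delta_\dae(e)$), the scalar $\delta$ separates as $\delta=c_\ang(a)+c_\dae(d)$, where $c_\ang$ depends only on $\ang$'s action and $c_\dae$ only on $\dae$'s. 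Since $b_\ang=b_\dae=1$, each agent picks exactly one component, so $c_\ang(\{G\})=\delta_\ang(G)$, $c_\ang(\{T\})=-b\,\delta_\ang(T)$, and symmetrically $c_\dae(\{G\})=\delta_\dae(G)$, $c_\dae(\{T\})=-b\,\delta_\dae(T)$.

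With this separation the existence of a \DSE is immediate. For any fixed $d'$, $u_\ang(a,d')=Y+\tfrac{f}{g}\bigl(c_\ang(a)+c_\dae(d')\bigr)$ is maximised over $a$ by the $a^\ast$ maximising $c_\ang$, \emph{independently of} $d'$; by the symmetric argument the $d^\ast$ maximising $c_\dae$ maximises $u_\dae(a',d)$ for every $a'$. Thus $a^\ast$ is (weakly) dominant for $\ang$ and $d^\ast$ for $\dae$, so $(a^\ast,d^\ast)$ satisfies $u_\ang(a^\ast,d')\ge u_\ang(a',d')$ and $u_\dae(a',d^\ast)\ge u_\dae(a',d')$ for all $(a',d')$, which is exactly the \DSE condition. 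Explicitly, $a^\ast=\{G\}$ iff $\delta_\ang(G)+b\,\delta_\ang(T)\ge 0$ and $d^\ast=\{G\}$ iff $\delta_\dae(G)+b\,\delta_\dae(T)\ge 0$, with ties harmless since only $\ge$ is required.

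The only delicate point is the additive separation $\delta=c_\ang(a)+c_\dae(d)$: it must be checked against the full four-way split of Definition~\ref{DefDeltaStress}, since a priori the $a\cap d$ case could couple the two agents. The separation survives precisely because that case yields the sum $\delta_\ang(e)+\delta_\dae(e)$, so no cross term between the agents ever appears. It is also worth remarking that, contrary to the adversarial reading of $\ang/\dae$, here the objectives $Y$ and $r$ move in the \emph{same} direction with the fiscal perturbation (both scale with $\delta$ through positive coefficients), which is exactly why the game is so well behaved and a \DSE always exists.
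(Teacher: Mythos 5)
Your proposal is correct and follows essentially the same route as the paper: both build the bi-matrix from Lemma~\ref{EasyFiscalPolicies} and Definition~\ref{DefDeltaStress}, and both rest on the observation that each agent's utility change from switching $\{T\}\leftrightarrow\{G\}$ is a fixed quantity, independent of the opponent's action, scaled by the positive factors $f/g$ and $e/g$. Your additive separation $\delta=c_\ang(a)+c_\dae(d)$ is just a cleaner packaging of the paper's increments $\mu_{\agent,T\to G}=b\,\delta_\agent(T)+\delta_\agent(G)$, which equal $c_\agent(\{G\})-c_\agent(\{T\})$, so the two arguments coincide.
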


%%%%%%%%%%%%%
%\begin{proof}
%%%%%%%%%%%%%
\begin{proof}
Observe that, in $\pertur$, $\delta_\ang(x)=\delta_\dae(x)=0$, for $x\notin\{G,T\}$. 
Furthermore, in  $\Gamma(\mathcal U)$ as $\mathcal A= \{T,G\}$ but $b_{\ang}=1$, the angel $\ang$ has only the possibility to act over one parameter  therefore $A_\ang =\{\{T\},\{G\}\}$ and similarly for $\dae$. 
Thus, in $\Gamma(\mathcal U)$  the sets of actions are 
$A_\ang =A_\dae=\{\{T\},\{G\}\}$. As $u_\ang(a,d)=Y(a,d)$ and $u_\dae(a,d)=r(a,d)$ the $\ang/\dae$ game is

\bigskip
{\renewcommand{\arraystretch}{1.2}
\begin{tabular}{cc|c|c|}
\cline{3-4}
& &\multicolumn{2}{|c| }{$\dae$}\\
\cline{3-4}
   & & $\{T\}$    & $\{G\}$\\
\cline{1-4}
\multicolumn{1}{|c|}{\multirow{2}{*} {$\ang$}} &
\multicolumn{1}{|c|}{$\{T\}$}   
& $Y(\{T\},\{T\}), r(\{T\},\{T\})$
& $Y(\{T\},\{G\}), r(\{T\},\{G\})$\\
\cline{3-4}
\multicolumn{1}{|c|}{}   &
\multicolumn{1}{|c|}{$\{G\}$}   
& $Y(\{G\},\{T\}), r(\{G\},\{T\})$
& $Y(\{G\},\{G\}), r(\{G\},\{G\})$\\
\cline{1-4}
\end{tabular}
}

\bigskip
%\noindent
According to Lemma \ref{EasyFiscalPolicies}
\[Y(a,d)=Y+\frac{f}{g}\delta_{\pertur}(G,T)[a,d],\;
r(a,d)=r+\frac{e}{g}\delta_{\pertur}(G,T)[a,d].
\]
To find the values $Y(a,d)$ and $r(a,d)$ 
we need  to  compute  the values of 
$\delta_\pertur(G,T)[a,d]= \delta_\pertur(G)[a,d]-b\delta_\pertur(T)[a,d]$. 
For instance, when  $(a,d)=(\{T\},\{T\})$, according to Definition \ref{DefDeltaStress}, 
$\delta_{\pertur}(G)[\{T\},\{T\}]=0$,
$\delta_{\pertur}(T)[\{T\},\{T\}]=\delta_\ang(T)+\delta_\dae(T)$
and 
$\delta_\pertur(G,T)[\{T\},\{T\}]= -b(\delta_\ang(T)+\delta_\dae(T))$.
The other  cases are
$\delta_\pertur(G,T)[\{T\},\{G\}]=\delta_\dae(G)-b\delta_\ang(T)$,
$\delta_\pertur(G,T)[\{G\},\{T\}]=\delta_\ang(G)-b\delta_\dae(T)$  and
$\delta_\pertur(G,T)[\{G\},\{G\}]=\delta_\ang(G)+\delta_\dae(G)$.
From those computations we conclude that the bi-matrix  form of the $\ang/\dae$ corresponds to the statement.

In order to analyse  the structure of Nash equilibria,
let us study best responses by case analysis.  Consider $\ang$'s utility increment  when $\dae$ is in $\{T\}$ and $\ang$ moves from $\{T\}$ to $\{G\}$ written as $T\to G$.
\begin{eqnarray*}
&&u_{\ang}(\{G\},\{T\})-u_{\ang}(\{T\},\{T\}) = Y(\{G\},\{T\})-Y(\{T\},\{T\}) \\
&&\quad=(f/g)\big(\delta_\pertur (G,T)[\{G\},\{T\}]-\delta_\pertur(G,T)[\{T\},\{T\}]\big)\\
&&\quad=(f/g)(b\delta_\ang(T)+\delta_\ang(G))
\end{eqnarray*}
Defining $\mu_{\ang,T\to G}=b\delta_\ang(T)+\delta_\ang(G))$ we write 
$u_{\ang}(\{G\},\{T\})-u_{\ang}(\{T\},\{T\}) = (f/g)\mu_{\ang,T\to G}$. We see that, abstracting from factor $f/g$,   the expression $\mu_{\ang,T\to G}$ measures the increment (in particular the sign) of $\ang$'s utility. Due to the linear structure of the model, 
 $\mu_{\dae,T\to G}=b\delta_\dae(T)+\delta_\dae(G))$ also makes sense to measures  $\dae$'s increment.
We prove the following claim: 
given $\agent\in\{\ang,\dae\}$ and defining
$\mu_{\agent, T\to G}
=b\delta_\agent(T)+\delta_\agent(G)$
it holds that 
\[
B_\agent(\{T\})= B_\agent(\{G\})= 
\begin{cases} 
\{G\}& \text{if $\mu_{\agent, T\to G}>0$}\\
\{T,G\} &\text{if $\mu_{\agent, T\to G}=0$}\\
\{T\}& \text{if $\mu_{\agent, T\to G}<0$}\\
\end{cases}
\]
To prove the claim,  we start with $\ang$
and consider $B_\ang(\{T\})$ and $B_\ang(\{G\})$ separately.
In order to find $B_\ang(\{T\})$, we know the change when $\ang$ moves between $\{T\}$ and $\{G\}$, but
$Y(\{G\},\{T\})-Y(\{T\},\{T\})=
(f/g)\mu_{\ang,T\to G}$.
For   $B_\ang(\{G\})$, we have also
$Y(\{G\},\{G\})-Y(\{T\},\{G\}) =(f/g)\mu_{\ang,T\to G}$ .
Therefore, for $d\in A_\dae$,
we have
$
Y(\{G\},d)-Y(\{T\}, d) 
=(f/g)\mu_{\ang, T\to G}.
$
As $f/g>0$, we conclude that
\[
B_\ang(\{T\})= B_\ang(\{G\})= 
\begin{cases} 
\{G\}& \text{if $\mu_{\ang, T\to G}>0$}\\
\{T,G\} &\text{if $\mu_{\ang, T\to G}=0$}\\
\{T\}& \text{if $\mu_{\ang, T\to G}<0$}\\
\end{cases}
\]
For $\dae$,  we have 
$r(a,\{G\})-r(a,\{T\})=(e/g)\mu_{\dae, T\to G}$ for $a\in A_\ang$.
As $e/g>0$, we obtain
\[
B_\dae(\{T\})= B_\dae(\{G\})= 
\begin{cases} 
\{G\}& \text{if $\mu_{\dae, T\to G}>0$}\\
\{T,G\} &\text{if $\mu_{\dae, T\to G}=0$}\\
\{T\}& \text{if $\mu_{\dae, T\to G}<0$}\\
\end{cases}
\]
As $\ang$ and $\dae$ have similar structure,  taking any agent 
$\agent\in \{\ang, \dae\}$,
the claim follows. 

According to the best responses, by case analysis,  we find the following \PNE structure:

\medskip
{\renewcommand{\arraystretch}{1.2}
\begin{tabular}{ c ||c | c | c }
 & $\mu_{\ang, T\to G}>0$ &$\mu_{\ang, T\to G}=0$ &$\mu_{\ang, T\to G}<0$\\
\hline
$\mu_{\dae, T\to G}>0$ &$\{G,G\}$&$\{G,T\}$,$\{G,G\}$&$\{G,T\}$\\
$\mu_{\dae, T\to G}=0$ &$\{T,G\}$
$\{G,G\}$
&$\{G,G\}$$\{G,T\}$$\{T,G\}$$\{T,T\}$&$\{T,T\}$$\{G,T\}$\\
$\mu_{\dae, T\to G}<0$ &$\{T,G\}$&$\{T,T\}$$\{T,G\}$&$\{T,T\}$
\end{tabular}
}

\medskip
To prove the existence of a  \DSE we consider first the  case 
$\mu_{\agent, T\to G}>0$, for $\agent\in\{\ang,\dae\}$. Let us show that, in such a  case, 
$(\{G\},\{G\})$ is a  dominant strategy equilibrium. 
When $\agent=\ang$ the conditions for \DSE are 
$u_\ang(\{G\}, d)\ge u_\ang(\{T\}, d)$ for $d\in A_\dae$
As for $d\in A_\dae$
\begin{eqnarray*}
u_\ang(\{G\},d)-u_\ang(\{T\},d) =
(f/g)\mu_{\ang, T\to G}>0
\end{eqnarray*}
the constraint hold.
When $\agent=\dae$ the condition  is
$u_\dae(a,\{G\})\ge u_\dae(a,\{T\})$ 
for any $a\in A_\ang$. As
\[u_\dae(a,\{G\})-u_\dae(a,\{T\})= (e/g)\mu_{\dae, T\to G}>0
\]
constraints hold and 
$(\{G\},\{G\})$ is a  dominant strategy equilibrium  when 
$\mu_{\ang, T\to G}>0$. The remaining cases follow by a similar argument.
\end{proof}%-----------------

%%%%%%%% comment to this theorem %%%%%
%\bigskip
In the situation analyzed on Theorem \ref{ThmFiscalPolicy} there is the possibility of  modifying government spending or taxes. For both components  there are positive and negative aspects, this is model by the perturbation strength  parameters
$\delta_\ang(G)$, $\delta_\dae(G)$, $\delta_\ang(T)$, and $\delta_\dae(T)$.
If we take into account {\em all} the possible perturbed  actions  together we get that
 \[\hat G= G+ \delta_\ang(G)+ \delta_\dae(G),\;
\hat T= T+\delta_\ang(T)+ \delta_\dae(T).\]
In such a case, the income $\hat Y$ and the interest rate $\hat r$ are given by
\[
\left(
\begin{array}{c}
\hat Y\\
\hat r
\end{array}
\right)
=
\frac{1}{g}\left(
\begin{array}{cc}
f & d/P\\
e &\hbox{\;\;}-(1-b)/P
\end{array}
\right)
\left(
\begin{array}{c}
a+c+\hat G-b\, \hat T\\
M
\end{array}
\right), 
\]
by linearity, we have  
\[
\left(
\begin{array}{c}
\hat Y\\
\hat r
\end{array}
\right)
=
\left(
\begin{array}{c}
Y\\
r
\end{array}
\right)
+\big(
\delta_\ang(G)+\delta_\dae(G)-b \delta_\ang(T)-b\delta_\dae(T)\big)
\frac{1}{g}\left(
\begin{array}{c}
f \\
e 
\end{array}
\right).
\]
The use of Game Theory gives the opportunity to study this situation under different "intermediate" scenarios with a  larger number of possibilities to consider. 
We adopted in Theorem \ref{ThmFiscalPolicy} a moderate view respecting to goodness and badness: there is too much luck in assuming that measures over $G$ and $T$ will be successful all together. This translates in the rule:  it is possible to apply $\delta_\ang(G)$ or $\delta_\ang(T)$ but not both {\em at the same time}. 
Our point about badness is similar, things are bad but not too bad, translating into similar rules for $\dae$'s actions.  
%This translate saying that it is possible to apply   $\delta_\dae(G)$ or $\delta_\dae(T)$, but not the two at the same time. 
We are {not} saying that it is not possible to simultaneously change  government spending and taxation. But if this is the case, the design of the uncertainty profile, and in consequence of the \angdae game, will be different. 
\remove{
This can happen when $\ang$ can act over taxation $\{T\}$
and $\dae$ can act over government spending $\{G\}$. To be more precise, when we say that 
it is not possible to apply $\delta_\ang(G)$ and $\delta_\ang(T)$ {\em at the same time}, we  say that in the $\ang$-$\dae$ game corresponding to Theorem \ref{ThmFiscalPolicy} a strategy like $(\{G,T\},\dots)$ does not appear.}

Let us comment Theorem \ref{ThmFiscalPolicy}.
The possible results depends on $\mu_{\agent,T\to G}$ for $\agent\in\{\ang,\dae\}$. Let us consider some cases. 
When $\mu_{\agent,T\to G}>0$ for $\agent\in\{\ang,\dae\}$ the strategy profile $(\{G\}\{G\})$ is the unique \PNE with utilities
\[Y(\{G\}\{G\})=  Y+ \frac{f}{g}(\delta_\ang(G)+\delta_\dae(G)),
r(\{G\}\{G\})=  Y+ \frac{e}{g}(\delta_\ang(G)+\delta_\dae(G)).\]
In such a case, just government spending is a good policy rule. Note that 
perturbations exerted  to taxes ($\delta_\ang(T)$, $\delta_\dae(T)$ do not appear). Therefore, it seems better to keep taxes unchanged.  The relation between both approaches is given by 
\[
\left(
\begin{array}{c}
\hat Y\\
\hat r
\end{array}
\right)
=
\left(
\begin{array}{c}
Y(\{G\}\{G\})\\
r(\{G\}\{G\})
\end{array}
\right)
-b\big( \delta_\ang(T)+\delta_\dae(T)\big)
\frac{1}{g}\left(
\begin{array}{c}
f \\
e 
\end{array}
\right).
\]
Thus, depending on the sign of $\delta_\ang(T)+\delta_\dae(T)$, one solution or the other one will be the best one. 
When $\mu_{\agent,T\to G}< 0$, for $\agent\in\{\ang,\dae\}$, the strategy profile 
$(\{T\}\{T\})$ is the unique \PNE. In such a case taxing is the good solution and the relation between both approaches is given by
\[
\left(
\begin{array}{c}
\hat Y\\
\hat r
\end{array}
\right)
=
\left(
\begin{array}{c}
Y(\{T\}\{T\})\\
r(\{T\}\{T\})
\end{array}
\right)
+\big(\delta_\ang(G)+\delta_\dae(G)\big)
\frac{1}{g}\left(
\begin{array}{c}
f \\
e 
\end{array}
\right).
\]
An interesting case happens when  $\mu_{\ang,T\to G}>0$ and  $\mu_{\dae,T\to G}<0$. In this situation $\{T,G\}$ is the unique \PNE with values
\[Y(\{T\}\{G\})=  Y+ \frac{f}{g}(\delta_\dae(G)-b \delta_\ang(T)),\;
r(\{T\}\{G\})=  Y+ \frac{e}{g}(\delta_\dae(G)-b\delta_\ang(T)).\]
Thus highlighting   a case where both, government spending and taxing, will occur in equilibrium. Comparing with $\hat Y$ and 
$\hat r$ we have that
\[
\left(
\begin{array}{c}
\hat Y\\
\hat r
\end{array}
\right)
=
\left(
\begin{array}{c}
Y(\{T\}\{T\})\\
r(\{T\}\{T\})
\end{array}
\right)
+\big(\delta_\ang(G)-b\delta_\dae(T)\big)
\frac{1}{g}\left(
\begin{array}{c}
f \\
e 
\end{array}
\right).
\]

Uncertainty profiles help us to catch and deal with many intermediate cases between total successfulness and complete failure. Note that uncertainty profiles are flexible tools. For instance we can catch the solution $\hat Y$ and $\hat r$ with an uncertainty profile as we show in the following example

\begin{ex}
Take a perturbation strength model $\pertur$ as in Theorem \ref{ThmFiscalPolicy} and consider the uncertainty profile 
$\mathcal U = \langle \mathcal E, \pertur, \{T,G\}, \{T,G\},  2,2, Y,r\rangle$.
Lemma \ref{LemmaNoFreedom} deals with this case having 
$(\{T,G\},\{T,G\})$ as a unique \PNE. The corresponding utilities are $\hat Y$ and $\hat r$. 
\hfill
\qed 
\end{ex}

%----------------------------
%--- subsection balanced budgets ---
%-----------------------------------

\subsection{On Balanced Budgets}
\label{balanced}
Now we consider an $\ang/\dae$ approach to deal with balanced budgets. To do that, we take a  special case of the perturbation strength model $\pertur$ considered in Theorem \ref{ThmFiscalPolicy} where the government spends the amount $\delta$ collected by taxation. We call it $\pertur_{\mathit{balanced}}$, 
\[
{\renewcommand{\arraystretch}{1.2}
\begin{tabular}{|| c || c |c|c|c||}
\hline
agent &$a, b, c, d, e, f$ & $T$ &  $G$ & $M, P$\\
\hline
$\ang$ &$0$&  $0$  & $\delta$ &$0$\\
$\dae$ &$0$ &  $\delta$ & $0$  & $0$\\
\hline
\end{tabular}
}
\]  
Consider the 
uncertainty profile $\mathcal U_{\mathit{balanced}}=\langle \mathcal E,   \pertur_{\mathit{balanced}}, \{G,T\}, \{G,T\}, 1, 1,Y, -r\rangle$.  From Theorem \ref{ThmFiscalPolicy},  we easily get the following description for $\Gamma(\mathcal U_{\mathit{balanced}})$ 
\begin{center}
{\renewcommand{\arraystretch}{1.5}
\begin{tabular}{cc|c|c|c}
\cline{3-4}
& &\multicolumn{2}{|c| }{$\dae$}\\
\cline{3-4}
   & & $\{T\}$    & $\{G\}$ \\
\cline{1-4}
\multicolumn{1}{|p{0.02cm}|}{\multirow{2}{*} {$\ang$}} 
& %\multicolumn{1}{|p{0.45cm}|}{\multirow{2}{*}
 {$\{T\}$}
 &\rule{0pt}{11pt}$ Y-\frac{f}{g}b\delta$, $-\big(r-\frac{e}{g}b\delta\big)$
 & $Y$, $-r$\\
%
%\multicolumn{1}{|c|}{}   
\cline{2-4}
\multicolumn{1}{|p{0.02cm}|}{}  
%&\multicolumn{1}{|p{0.01cm}|}{\multirow{2}{*} {$\{G\}$}} 
& $\{G\}$
&\rule{0pt}{11pt}$Y+ \frac{f}{g}\delta(1-b)$, $-\big( r+\frac{e}{g}\delta(1-\delta)\big)$ & 
$Y+ \frac{f}{g}\delta$ , $-\big(r+\frac{e}{g}\delta\big)$\\

\cline{1-4}
\end{tabular}
}
\end{center}

%\bigskip
%\noindent
The sets of best responses are
$B_\ang(\{T\})=B_\ang(\{G\})=\{G\}$
and
$B_\dae(\{T\})=B_\dae(\{G\})=\{T\}$.
The the only \PNE is $(\{G\},\{T\})$. This Nash equilibrium corresponds to the case where Government spends the amount $\delta$ collected by taxes and this corresponds to a balanced budget.
%

%%%%%%%%%%%%%%%%%%%%%%%%%%%%%%%%%
% subsection IS-LM :  ZERO-SUM GAMES %%%%%%%
%%%%%%%%%%%%%%%%%%%%%%%%%%%%%%%%%
%\noindent 
\subsection{Direct Control over Income and Interest Rate from $\ang$ and $\dae$}
In various  previous examples giving raise to non zero-sum games we have considered   $u_\ang = Y$ and 
$u_\dae=r$. In such cases,  the goal of  $\ang$ consists on maximize the $Y$ (a good thing) while $\dae$  tries to maximize the interest (a bad thing). 
We consider now the case where $\ang$ is interested in controlling  both, $Y$ and $r$,  directly. Informally  $\ang$  likes to keep $Y$ as high as possible and in order to get that  $r$ needs to be as small as possible.
A way to model this situation is to consider $u_\ang = Y-r$. Now $\ang$ is interested in increasing $Y$ and decreasing $r$ in order to keep $u_\ang$ as high as possible. We can tune the relevance of $Y$ in front of $r$ using a parameter  $k>0$ and defining 
$u_\ang(a,d)= Y(a,d)-k\, r(a,d)$. 
In the case that  $\dae$ is interested just in the opposite,  $u_\dae(a,d) = k\, r(a,d)-Y(a,d)$. Therefore, $u_\ang = -u_\dae$ and  $\Gamma(\mathcal U)$ is a zero sum game with $u(a,d)=u_\ang(a,d)$.
Our next results analyzes the \PNE of a case of perturbed fiscal policy where any effort undertaken by $\ang$ can be cancelled by $\dae$ and vice-versa,  this can be modeled by letting 
$\delta_\ang(T) + \delta_\dae(T)=0$ and $\delta_\ang(G) + \delta_\dae(G)=0$. 
%
%%%%%%%
\begin{theorem}
\label{k-opposite}
Let $\pertur$ be a perturbation strength model for the IS-LM model given by \[
%\begin{footnotesize}
{\renewcommand{\arraystretch}{1.2}
\begin{tabular}{|| c ||c | c  |c | c ||}
\hline
agent &$a, b, c, d, e, f$ & $T$ & $G$ & $M, P$\\
\hline
$\ang$ &$0$ & $\delta_\ang(T)= -\delta_T$ & $\delta_\ang(G)=\delta_G$ & $0$\\
$\dae$ &$0$ & $\delta_\dae(T)=\delta_T$ & $\delta_\dae(G)=-\delta_G$ & $0$\\
\hline
\end{tabular}
}
%\end{footnotesize}
\]
Consider the uncertainty profile $\mathcal U=\langle\mathcal E,\pertur,  \{T,G\},  \{T,G\}, 1, 1, u\rangle$ where  $u(a,d)= Y(a,d)-k\, r(a,d)$, for some  
$\delta_T>0$ and $\delta_G\ge 0$.  
Letting $\delta= b\delta_T-\delta_G$, the associated zero-sum $\ang/\dae$  game $\Gamma(\mathcal U)$ is given by 

\medskip
{\renewcommand{\arraystretch}{1.2}
\begin{tabular}{cc|c|c|}
\cline{3-4}
& &\multicolumn{2}{|c| }{$\dae$}\\
\cline{3-4}
   & & $\{T\}$    & $\{G\}$\\
\cline{1-4}
\multicolumn{1}{|c|}{\multirow{2}{*} {$\ang$}} &
\multicolumn{1}{|c|}{$\{T\}$}   
& $Y-k\, r$ 
& $(Y-k\, r)+ (1/g)(f-k\, e)\delta$\\
\cline{3-4}
\multicolumn{1}{|c|}{}   &
\multicolumn{1}{|c|}{$\{G\}$}   
& $(Y-k\, r) -(1/g)(f-k\, e)\delta$ 
& $Y-k\, r$\\
\cline{1-4}
\end{tabular}
}

\medskip
%\noindent
When  $(f-k\, e)\delta=0$, the four strategy profiles of $\Gamma(\mathcal U)$ are \PNE, otherwise 
either $(\{T\},\{T\})$ or $(\{G\},\{G\})$ is the unique \PNE of $\Gamma(\mathcal U)$ .
\end{theorem}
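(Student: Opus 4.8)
The plan is to first write the bi-matrix explicitly and then read off the equilibria from the zero-sum structure. Since the perturbation model $\pertur$ allows only $G$ and $T$ to be modified, Lemma~\ref{EasyFiscalPolicies} applies verbatim and gives $Y(a,d)=Y+\frac{f}{g}\delta_{\pertur}(G,T)[a,d]$ and $r(a,d)=r+\frac{e}{g}\delta_{\pertur}(G,T)[a,d]$, whence $u(a,d)=Y(a,d)-k\,r(a,d)=(Y-kr)+\frac{1}{g}(f-ke)\,\delta_{\pertur}(G,T)[a,d]$. So the entire game is controlled by the single scalar $\delta_{\pertur}(G,T)[a,d]=\delta_{\pertur}(G)[a,d]-b\,\delta_{\pertur}(T)[a,d]$, and the first task is to evaluate it in the four joint actions using Definition~\ref{DefDeltaStress} together with the sign-flipped entries $\delta_\ang(T)=-\delta_T$, $\delta_\dae(T)=\delta_T$, $\delta_\ang(G)=\delta_G$, $\delta_\dae(G)=-\delta_G$.

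For the two diagonal profiles the angel and the daemon act on the \emph{same} component, so their opposite perturbations cancel: on $(\{T\},\{T\})$ one has $T\in a\cap d$ with $\delta_\ang(T)+\delta_\dae(T)=0$, and symmetrically on $(\{G\},\{G\})$, giving $\delta_{\pertur}(G,T)=0$ in both cases. For the off-diagonal profiles the two agents act on different components: $(\{T\},\{G\})$ yields $\delta_{\pertur}(G,T)=\delta_\dae(G)-b\,\delta_\ang(T)=b\delta_T-\delta_G=\delta$, while $(\{G\},\{T\})$ yields $\delta_\ang(G)-b\,\delta_\dae(T)=-(b\delta_T-\delta_G)=-\delta$. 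Substituting these four values reproduces the bi-matrix in the statement, with the constant $Y-kr$ on the diagonal and $(Y-kr)\pm\frac{1}{g}(f-ke)\delta$ off the diagonal.

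For the equilibrium analysis I would set $c=\frac{1}{g}(f-ke)\delta$ and exploit that $\Gamma(\mathcal U)$ is zero-sum ($u_\ang=u$, $u_\dae=-u$). A direct subtraction shows $u(\{T\},d)-u(\{G\},d)=c$ for every $d\in A_\dae$ and $u(a,\{T\})-u(a,\{G\})=-c$ for every $a\in A_\ang$. Since $g=(1-b)f+de>0$ under the model's positivity constraints, the sign of $c$ coincides with that of $(f-ke)\delta$, so the best responses are independent of the opponent's choice: if $c>0$ then $B_\ang(d)=\{\{T\}\}$ and $B_\dae(a)=\{\{T\}\}$ (recall that $\dae$ minimizes $u$), forcing the unique \PNE $(\{T\},\{T\})$; if $c<0$ both best responses are $\{G\}$, forcing the unique \PNE $(\{G\},\{G\})$; and if $(f-ke)\delta=0$ then $c=0$, all four entries coincide, and every profile is trivially a \PNE. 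The only delicate point is the bookkeeping in the first step, namely correctly placing $T$ and $G$ in $a\setminus d$, $d\setminus a$, or $a\cap d$ and tracking the sign flips so that the diagonal cancels and the off-diagonal produces $\pm\delta$; once the bi-matrix is fixed the zero-sum structure makes the remaining case analysis immediate.
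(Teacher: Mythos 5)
Your proposal is correct and follows essentially the same route as the paper: compute the four values of $\delta_{\pertur}(G,T)[a,d]$ to obtain the bi-matrix (the paper substitutes directly into the equilibrium formula where you invoke Lemma~\ref{EasyFiscalPolicies}, but that lemma is exactly that substitution), then read off the equilibria from the zero-sum structure. Your best-response argument is in fact slightly cleaner than the paper's, which splits additionally on the sign of $Y-k\,r$ even though, as your subtraction $u(\{T\},d)-u(\{G\},d)=c$ makes plain, that sign is irrelevant to which profile is the equilibrium.
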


\begin{proof}
Let us compute the utilities in $\Gamma(\mathcal U)$. We can easily get 
$Y(\{T\},\{T\}) =Y(\{G\},\{G\}) = Y$ and
$r(\{T\},\{T\}) =r(\{G\},\{G\}) = r$. Recall that the expression of the equilibrium point is,
\begin{eqnarray*}
Y(\{T\}, \{G\}) 
&=& \frac{f}{g}\Big( a+c + G-\delta_G -b(T-\delta_T)\Big) + \frac{d}{g}\frac{M}{P}
= Y +\frac{f}{g}\delta.
\end{eqnarray*}
Thus, summarizing we have
\begin{eqnarray*}
&&Y(\{T\},\{G\}) = Y + (f/g)\delta,\; 
r(\{T\},\{G\}) = r+(e/g)\delta,\\ 
&&
Y(\{G\}, \{T\}) = Y -(f/g)\delta,\;
r(\{G\}, \{T\}) = r-(e/g)\delta.
\end{eqnarray*}
Considering the utilities, clearly 
$u(\{T\},\{T\})=  u(\{G\},\{G\})=Y-k\, r$
and 
\begin{eqnarray*}
u(\{T\},\{G\})
&=& Y(\{T\},\{G\}) - k\, r(\{T\},\{G\})\\ 
&=&Y+(f/g)\delta -k(r+(e/g)\delta)= Y-k\, r+ (1/g)(f-k\, e)\delta.
\end{eqnarray*}
In a similar way  $u(\{G\},\{T\})= (Y-k\, r) - (1/g)(f-k\,e)\delta$ and we get 
the claimed expression for $\Gamma(\mathcal U)$. 
Let us analyze the structure of the \PNE. When $(f-k\,e)\delta=0$ all the pure strategies have the same utility $Y-k\, r$ and therefore all of them are 
\PNE.
When $Y-k\, r>0$ and $(f-k\, e)\delta>0$ the only \PNE is  $(\{T\},\{T\}) $.
 When $Y-k\, r>0$ and $(f-k\, e)\delta<0$ the only \PNE is  $(\{G\},\{G\}) $.
When $Y-k\, r<0$ and $(f-k\, e)\delta>0$ the only \PNE is  $(\{T\},\{T\}) $.
When $Y-k\, r<0$ and $(f-k\, e)\delta<0$ the only \PNE is  $(\{G\},\{G\}) $.

%%%%%%% end proof %%%%%

%--------- paragraph zero-sum concrete utility------

When  $k=Y/r$,  
reconsidering Example \ref{FirstGame} taking $u(a,d)=Y(a,d)-(Y/r)r(a,d)$ and Theorem \ref{k-opposite}, we have that 
$\Gamma(\mathcal U)$ is

\bigskip

{\renewcommand{\arraystretch}{1.2}
\begin{tabular}{cc|c|c|}
\cline{3-4}
& &\multicolumn{2}{|c| }{$\dae$}\\
\cline{3-4}
   & & $\{T\}$    & $\{G\}$\\
\cline{1-4}
\multicolumn{1}{|c|}{\multirow{2}{*} {$\ang$}} &
\multicolumn{1}{|c|}{$\{T\}$}   
& $0$ 
& $(1/g)(f-(Y/r)e)\delta$\\
\cline{3-4}
\multicolumn{1}{|c|}{}   &
\multicolumn{1}{|c|}{$\{G\}$}   
& $ -(1/g)(f-(Y/r)e)\delta$ 
& $0$\\
\cline{1-4}
\end{tabular}
}

\bigskip
%\noindent
When $u(\{T\},\{G\}) >0$, the game has a \PNE in $(\{T\},\{T\})$.
When  $u(\{T\},\{G\}) <0$ the \PNE is  $(\{G\},\{G\})$.
This concludes the proof.
\end{proof}

%%%%%%%%%%%%%
%%%%%%% SECTION is-mp %%%%%
%%%%%%%%%%%%%%

\section{Uncertainty in the IS-MP Model}
\label{section-IS-MP}
Now we consider the case where the income $Y$ and the inflation $\pi$  become uncertain 
due to perturbations in  $\{\overline{Y}, \pi^*\}$. 
We analyse the  
case where a benevolent $\ang$ tries to keep the income as high as possible, $u_\ang=Y$.
As in   Example \ref{IS-MP-game}, we consider different  views  of $\dae$ in relation to $r$.
To model the case where $\dae$ tries to maximize the inflation, we take 
$u_{\dae} = \pi$. To model the case where $\pi$ is a dis-utility  $\dae$ tries to minimize $\pi$, we take $u_{\dae} = -\pi$. 
The following theorem consider both cases.
%\medskip
%
%
\begin{theorem}
\label{IS-MP-Thm1}
Let $S$ be a perturbation strength model for the IS-MP model with 
$\delta_\ang(e)=\delta_\dae(e)=0$,  for any $e\in \mathcal E\setminus\{\overline{Y}, \pi^*\}$ and
$\delta_\ang(\overline{Y})>0$ and $\delta_\dae(\pi^*)>0$.
Consider and uncertainty profile  
$\mathcal U = \langle \mathcal E,\pertur, \{\overline{Y}, \pi^*\},\{\overline{Y}, \pi^*\}, 1,1, Y , u_{\dae} \rangle$. When $u_{\dae} = \pi$
the $\ang/\dae$ game is

\medskip
{\renewcommand{\arraystretch}{1.2}
\begin{tabular}{cc|c|c|}
\cline{3-4}
& &\multicolumn{2}{|c| }{$\dae$}\\
\cline{3-4}
\rule{0pt}{11pt}
   & & $\{\overline{Y}\}$   & $\{\pi^*\}$\\
\cline{1-4}
%\rule{0pt}{11pt}
\multicolumn{1}{|c|}{\multirow{2}{*} {$\ang$}} &
%\rule{0pt}{11pt}
\multicolumn{1}{|c|}{\rule{0pt}{11pt}$\{\overline{Y}\}$}   
%\rule{0pt}{11pt}
& $Y+\delta_\ang(\overline{Y})+\delta_\dae(\overline{Y}),\;\pi$
& $Y+\delta_\ang(\overline{Y}),\;  \pi +\delta_\dae(\pi^*)$\\
%\rule{0pt}{11pt}
\cline{3-4}
%\rule{0pt}{11pt}
\multicolumn{1}{|c|}{}   &
%
%\rule{0pt}{11pt}
\multicolumn{1}{|c|}{\rule{0pt}{11pt}$\{\pi^*\}$}   
& $ Y+\delta_\dae(\overline{Y}), \;\pi +\delta_\ang(\pi^*)$
& $Y,\; \pi+\delta_\ang(\pi^*) +\delta_\dae(\pi^*)$\\
\cline{1-4}
\end{tabular}
}

\medskip
%\noindent
$\{\overline{Y}\}$ is the  dominant strategy for $\ang$ and $\{\pi^*\}$ is the  dominant strategy for $\dae$.
The unique \PNE  is $(\{\overline{Y}\},\{\pi^*\})$ with
$Y(\{\overline{Y}\},\{\pi^*\})= Y+\delta_\dae(\overline{Y})$ and $\pi(\{\overline{Y}\},\{\pi^*\})=\pi +\delta_\ang(\pi^*)$.
When  $u_{\dae} = -\pi$  the game is 

\medskip
{\renewcommand{\arraystretch}{1.2}
\begin{tabular}{cc|c|c|}
\cline{3-4}
& &\multicolumn{2}{|c| }{$\dae$}\\
\cline{3-4}
\rule{0pt}{11pt}
   & & $\{\overline{Y}\}$   & $\{\pi^*\}$\\
\cline{1-4}
%\rule{0pt}{11pt}
\multicolumn{1}{|c|}{\multirow{2}{*} {$\ang$}} &
%\rule{0pt}{11pt}
\multicolumn{1}{|c|}{\rule{0pt}{11pt}$\{\overline{Y}\}$}   
%\rule{0pt}{11pt}
& $Y+\delta_\ang(\overline{Y})+\delta_\dae(\overline{Y}),\;-\pi$
& $Y+\delta_\ang(\overline{Y}),\;  -\pi +\delta_\dae(\pi^*)$\\
%\rule{0pt}{11pt}
\cline{3-4}
%\rule{0pt}{11pt}
\multicolumn{1}{|c|}{}   &
%
%\rule{0pt}{11pt}
\multicolumn{1}{|c|}{\rule{0pt}{11pt}$\{\pi^*\}$}   
& $ Y+\delta_\dae(\overline{Y}), \;-\pi +\delta_\ang(\pi^*)$
& $Y,\; -\pi+\delta_\ang(\pi^*) +\delta_\dae(\pi^*)$\\
\cline{1-4}
\end{tabular}
}

\medskip
%\noindent
The  dominant strategy for both $\ang$ and $\dae$ is $\{\overline Y\}$ and
the only \PNE is $(\{\overline{Y}\},\{\overline{Y}\})$. In this case 
$Y(\{\overline{Y}\},\{\overline{Y}\}) =Y+\delta_\ang(\overline{Y})+\delta_\dae(\overline{Y})$ and  
$\pi(\{\overline{Y}\},\{\overline{Y}\})=\pi$. 
\end{theorem}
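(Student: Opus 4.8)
The plan is to construct the two bi-matrices directly from Lemma~\ref{JustPerturbationsIS-MP} and then extract the dominant strategies by a pairwise comparison of rows and columns. First I would record that since $\mathcal A=\mathcal D=\{\overline{Y},\pi^*\}$ with $b_\ang=b_\dae=1$, the action sets collapse to $A_\ang=A_\dae=\{\{\overline{Y}\},\{\pi^*\}\}$, so there are exactly four joint actions. For each one I would evaluate $\delta_{\pertur}(\overline{Y})[a,d]$ and $\delta_{\pertur}(\pi^*)[a,d]$ by reading off which of the four clauses of Definition~\ref{DefDeltaStress} applies (the relevant component lying in $a\cap d$, in $a\setminus d$, in $d\setminus a$, or in neither), and substitute into $Y(a,d)=Y+\delta_{\pertur}(\overline{Y})[a,d]$ and $\pi(a,d)=\pi+\delta_{\pertur}(\pi^*)[a,d]$. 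This reproduces the two tables in the statement with no further work.

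The heart of the argument is the observation that, thanks to the additive form supplied by Lemma~\ref{JustPerturbationsIS-MP}, each player's incremental utility is independent of the opponent's choice. Concretely, for every $d\in A_\dae$ I would compute
\[
u_\ang(\{\overline{Y}\},d)-u_\ang(\{\pi^*\},d)=\delta_{\pertur}(\overline{Y})[\{\overline{Y}\},d]-\delta_{\pertur}(\overline{Y})[\{\pi^*\},d]=\delta_\ang(\overline{Y}),
\]
the daemon's contribution $\delta_\dae(\overline{Y})$ cancelling in both the $d=\{\overline{Y}\}$ case (where it appears in each term as the $a\cap d$ resp.\ $d\setminus a$ part) and the $d=\{\pi^*\}$ case. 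Since $\delta_\ang(\overline{Y})>0$, this makes $\{\overline{Y}\}$ a \emph{dominant} strategy for $\ang$ in both games. Symmetrically, I would compute $u_\dae(a,\{\pi^*\})-u_\dae(a,\{\overline{Y}\})$ for every $a\in A_\ang$: when $u_\dae=\pi$ the angel's term $\delta_\ang(\pi^*)$ cancels and the difference equals $\delta_\dae(\pi^*)>0$, so $\{\pi^*\}$ is dominant for $\dae$; when $u_\dae=-\pi$ the sign flips and the same difference becomes $-\delta_\dae(\pi^*)<0$, so $\{\overline{Y}\}$ becomes dominant.

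Having isolated the dominant strategies, the unique \PNE is forced to be the dominant-strategy profile, namely $(\{\overline{Y}\},\{\pi^*\})$ in the first game and $(\{\overline{Y}\},\{\overline{Y}\})$ in the second; I would then read the equilibrium income and inflation off the corresponding table entry. I do not expect a genuine obstacle, since the linearity packaged into Lemma~\ref{JustPerturbationsIS-MP} trivialises every payoff computation; the only point needing care is verifying the cancellation cleanly enough to conclude \emph{dominance} (a best response uniformly over the opponent's actions) rather than a mere equilibrium best response. The one piece of bookkeeping I would double-check is the subscripts in the reported equilibrium values: at $(\{\overline{Y}\},\{\pi^*\})$ it is $\ang$ who acts on $\overline{Y}$ and $\dae$ who acts on $\pi^*$, so Definition~\ref{DefDeltaStress} yields $Y(\{\overline{Y}\},\{\pi^*\})=Y+\delta_\ang(\overline{Y})$ and $\pi(\{\overline{Y}\},\{\pi^*\})=\pi+\delta_\dae(\pi^*)$, consistent with the bi-matrix entry.
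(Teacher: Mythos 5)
Your proposal is correct and follows essentially the same route as the paper's proof: build the bi-matrix from Lemma~\ref{JustPerturbationsIS-MP} and the case analysis of Definition~\ref{DefDeltaStress}, then read off dominance directly from the signs of $\delta_\ang(\overline{Y})$ and $\delta_\dae(\pi^*)$, the opponent's contribution cancelling in each difference. Your closing bookkeeping remark is also well taken: the equilibrium values printed after the first bi-matrix in the theorem statement, $Y+\delta_\dae(\overline{Y})$ and $\pi+\delta_\ang(\pi^*)$, do not match the $(\{\overline{Y}\},\{\pi^*\})$ entry of that same bi-matrix, and your values $Y+\delta_\ang(\overline{Y})$ and $\pi+\delta_\dae(\pi^*)$ are the consistent ones.
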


%%%%%%% begin proof %%%%%%%%%
\begin{proof}
Let us consider the game $\Gamma(\mathcal U)$. 
According to Lemma \ref{JustPerturbationsIS-MP} we have
\[u_\ang(a,d)=Y(a,d) = Y+ \delta_{\pertur}(\overline{Y})[a,d],\;
u_\dae(a,d)=\pi(a,d) = \pi+\delta_{\pertur}(\pi^*)[a,d]
\] 
The set of strategy profiles is:
\[
A_\ang\times A_\dae=\{(\{\overline{Y}\},\{\overline{Y}\}),
(\{\overline{Y}\},\{\pi^*\}),(\{\pi^*\},\{\overline{Y}\}),(\{\pi^*\},\{\pi^*\})\}
\]
This gives the following bi-matrix form.

\bigskip
{\renewcommand{\arraystretch}{1.2}
\begin{tabular}{cc|c|c|}
\cline{3-4}
& &\multicolumn{2}{|c| }{$\dae$}\\
\cline{3-4}
\rule{0pt}{11pt}
   & & $\{\overline{Y}\}$   & $\{\pi^*\}$\\
\cline{1-4}
\multicolumn{1}{|c|}{\multirow{2}{*} {$\ang$}} &
\multicolumn{1}{|c|}{\rule{0pt}{11pt}$\{\overline{Y}\}$}   
& 
$Y+\delta_\pertur(\overline{Y})[\{\overline{Y}\},\{\overline{Y}\}], \;
\pi + \delta_\pertur(\pi^*)[\{\overline{Y}\},\{\overline{Y}\}]$
& $\dots,\dots$\\
\cline{3-4}
\multicolumn{1}{|c|}{}   &
\multicolumn{1}{|c|}{\rule{0pt}{11pt}$\{\pi^*\}$}   
& $ Y+\delta_\pertur(\overline{Y})[\{\pi^*\},\{\overline{Y}\}], \;
\pi + \delta_\pertur(\pi^*)[\{\pi^*\},\{\overline{Y}\}] $
& $\dots,\dots$\\
\cline{1-4}
\end{tabular}
}

\bigskip
%\noindent
Consider the strategy profile $(a,d)=(\{\overline{Y}\},\{\overline{Y}\})$. 
According to the Definition \ref{DefDeltaStress}, as $\overline{Y}\in a\cap d=\{\overline{Y}\}$, it holds  $ \delta_{\pertur}(\overline{Y})[a,d]= \delta_\ang(\overline{Y})+\delta_\dae(\overline{Y})$ and 
$u_\ang(\{\overline{Y}\},\{\overline{Y}\})= Y + \delta_\ang(\overline{Y})+\delta_\dae(\overline{Y})$.
As $\pi\not\in a\cup d=\{\overline{Y}\}$, the perturbation strength verifies $\delta_{\pertur}(\pi^*)[a,d]=0$ and therefore 
$u_\dae(\{\overline{Y}\},\{\overline{Y}\})= \pi$. The remaining  cases are similar and thus we get the claimed bi-matrix form for 
$\Gamma(\mathcal U)$.
Let us analyse  dominance. For $\ang$, as $\delta_\ang(\overline{Y})>0$, independently of the sign of 
$\delta_\dae(\pi^*)$, best responses verify
$B_\ang(\overline{Y})=B_\ang(\pi^*) = \overline{Y}$.
For $\dae$, as $\delta_\dae(\pi^*)>0$, best responses are
$B_\dae(\overline{Y})=B_\dae(\pi^*) = \pi^*$. The proof for $\mathcal U_2$ is similar and the claim  follows. 
\end{proof}

\bigskip
%\noindent
%--comment to the theorem
It is worth observing  that for $\langle \mathcal E,\pertur, \{\overline{Y}, \pi^*\},\{\overline{Y}, \pi^*\}, 1,1,Y, \pi\rangle$, the   
selfish behaviour of $\ang$ and $\dae$ give a \PNE $(\{\overline{Y}\},\{\pi^*\})$ corresponding to a "moderate" perturbed  values i.e., $Y(\{\overline{Y}\},\{\pi^*\})=Y+\delta_\ang(\overline{Y})$ and 
$\pi(\{\overline{Y}\},\{\pi^*\})=\pi+\delta_\dae(\pi^*)$. 
Selfish behaviour protect us from extreme behaviours with maximal income or maximal inflation. The two extreme cases have utilities  $\big(Y+\delta_\ang(\overline{Y})+\delta_\dae(\overline{Y}),\pi \big)$ and $\big(Y, \pi+\delta_\ang(\pi^*)+\delta_\dae(\pi^*)\big)$.
Situation changes when
$\langle \mathcal E,\pertur, \{\overline{Y}, \pi^*\},\{\overline{Y}, \pi^*\}, 1,1,Y, -\pi\rangle$.  
In this case situation evolves to  a \PNE $(\{\overline{Y}\},\{\overline{Y}\})$ such that 
$Y(\{\overline{Y}\},\{\overline{Y}\}) =Y+\delta_\ang(\overline{Y})+\delta_\dae(\overline{Y})$ and  
$\pi(\{\overline{Y}\},\{\overline{Y}\})=\pi$. In such a case $\ang$ and $\dae$ move on to get high income $Y$ and to do not touch at the  inflation $\pi$.

A possible practical implication of the Theorem \ref{section-IS-MP} follows. The perturbation model $\pertur$ acts over  $\overline{Y}$ and $\pi^*$ through
$\delta_\ang(\overline{Y})>0$, $\delta_\ang(\pi^*) $,  $\delta_\dae(\overline{Y})$, $\delta_\dae(\pi^*)>0$. When  $u_{\dae} = \pi$ the \PNE is $(\{\overline Y\},\{\pi^*\})$,
when  $u_{\dae} = -\pi$ the \PNE is $(\{\overline Y\},\{\overline Y\})$ it  is worth noting that the Nash equilibria are independent of the signs of $\delta_\ang(\pi^*)$ and  $\delta_\dae(\overline{Y})$. As a consequence any policy trying to act over $\delta_\ang(\pi^*)$ or  $\delta_\dae(\overline{Y})$ will not modify the general structure of the Nash equilibria.

We conclude our  study considering a subclass of zero-sum \angdae games generalizing  the conditions of the uncertainty profiles considered in Example \ref{IS-MP-game}.

\begin{theorem}
\label{IS-MP-ThmZeroSum}
Let $S$ be a perturbation strength model for the IS-MP model such that $\delta_\ang(e)=\delta_\dae(e)=0$,  for any $e\in \mathcal P$.
Let $\mathcal U = \langle \mathcal E,\pertur, \mathcal A,\mathcal D, 1,1, u\rangle$ where
\[u(a,d)= Y(a,d)-(Y(\{\emptyset\},\{\emptyset\})/\pi(\{\emptyset\},\{\emptyset\}))\pi(a,d).\] 
The associated zero-sum $\ang/\dae$ game $\Gamma(\mathcal U)$ is:

\medskip
{\renewcommand{\arraystretch}{1.2}
\begin{tabular}{cc|c|c|}
\cline{3-4}
& &\multicolumn{2}{|c| }{$\dae$}\\
\cline{3-4}
   & & \rule{0pt}{11pt}$\{\overline{Y}\}$   & $\{\pi^*\}$\\
\cline{1-4}
\multicolumn{1}{|c|}{\multirow{2}{*} {$\ang$}} &
\multicolumn{1}{|c|}{\rule{0pt}{11pt}$\{\overline{Y}\}$}   
& 
$\delta_\ang(\overline{Y})+\delta_\dae(\overline{Y})$
& $\delta_\ang(\overline{Y})-(Y/M)\delta_\dae(\pi^*)$\\
\cline{3-4}
\multicolumn{1}{|c|}{}   &
\multicolumn{1}{|c|}{\rule{0pt}{11pt}$\{\pi^*\}$}   
& $ \delta_\dae(\overline{Y})-(Y/\pi)\delta_\ang(\pi^*)$
& $-(Y/\pi)(\delta_\ang(\pi^*) +\delta_\dae(\pi^*))$\\
\cline{1-4}
\end{tabular}
}

\bigskip
%\noindent
When $\delta_\agent(e)>0$, for $\agent\in\{\ang,\dae\}$ and $e\in \{\overline{Y},\pi^*\}$,
$\{\overline{Y}\}$ is the  dominant strategy for $\ang$ and $\{\pi^*\}$ is the  dominant strategy for $\dae$.
Furthermore, 
$(\{\overline{Y}\},\{\pi^*\})$ is the unique \PNE.
\end{theorem}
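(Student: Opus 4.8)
The plan is to prove the statement in three stages: first derive the bi-matrix, then verify strict dominance of $\{\overline{Y}\}$ for $\ang$ and of $\{\pi^*\}$ for $\dae$, and finally read off the unique \PNE. From the displayed table the relevant action sets are the singletons over $\{\overline{Y},\pi^*\}$, so I read the profile as having $\mathcal A=\mathcal D=\{\overline{Y},\pi^*\}$ with $b_\ang=b_\dae=1$, giving $A_\ang=A_\dae=\{\{\overline{Y}\},\{\pi^*\}\}$. Every admissible joint action $(a,d)$ then satisfies $a,d\subseteq\{\overline{Y},\pi^*\}$, so in particular $\epsilon,v\notin a\cup d$. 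Since also $\delta_\ang(e)=\delta_\dae(e)=0$ on $\mathcal P$, Lemma~\ref{IS-MP:lemma-linear} applies, and because $\delta_{\pertur}(\epsilon)[a,d]=\delta_{\pertur}(v)[a,d]=0$ it collapses to $Y(a,d)=Y+\delta_{\pertur}(\overline{Y})[a,d]$ and $\pi(a,d)=\pi+\delta_{\pertur}(\pi^*)[a,d]$ (equivalently, one may invoke Lemma~\ref{JustPerturbationsIS-MP} directly).

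Substituting these into $u(a,d)=Y(a,d)-(Y/\pi)\,\pi(a,d)$ and using $Y(\emptyset,\emptyset)=Y$, $\pi(\emptyset,\emptyset)=\pi$ from Lemma~\ref{lemmaTrivial}, the constant $Y$-terms cancel and I obtain the clean reduced form $u(a,d)=\delta_{\pertur}(\overline{Y})[a,d]-(Y/\pi)\,\delta_{\pertur}(\pi^*)[a,d]$. Evaluating the two perturbation terms on each of the four joint actions through the four cases of Definition~\ref{DefDeltaStress} (membership in $a\setminus d$, $d\setminus a$, $a\cap d$, or neither) produces the four entries; this yields exactly the claimed table, where I note that the top-right entry should read $\delta_\ang(\overline{Y})-(Y/\pi)\,\delta_\dae(\pi^*)$.

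For dominance I compare rows and columns entrywise, recalling that the game is zero-sum with $u_\ang=u$ and $u_\dae=-u$, so $\ang$ maximizes $u$ while $\dae$ minimizes it. For $\ang$, in each column the difference $u(\{\overline{Y}\},d)-u(\{\pi^*\},d)$ simplifies to $\delta_\ang(\overline{Y})+(Y/\pi)\,\delta_\ang(\pi^*)$, which is strictly positive, so $\{\overline{Y}\}$ is strictly dominant. Symmetrically, for $\dae$ the difference $u(a,\{\overline{Y}\})-u(a,\{\pi^*\})$ equals $\delta_\dae(\overline{Y})+(Y/\pi)\,\delta_\dae(\pi^*)>0$ in each row, so $\{\pi^*\}$ makes $u$ smaller and is strictly dominant for the minimizer $\dae$. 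Hence $(\{\overline{Y}\},\{\pi^*\})$ is a dominant-strategy equilibrium and therefore the unique \PNE.

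The one genuine subtlety is that every dominance inequality hinges on $Y/\pi>0$, i.e.\ on the baseline income and inflation sharing a sign (both are positive in the running examples); I would state this positivity explicitly, since without it the signs of the comparisons could flip and dominance could fail. Beyond that the work is the routine arithmetic of subtracting pairs of matrix entries, and the main care required is to respect the zero-sum convention so that dominance for $\dae$ is read as \emph{minimization} of $u$ rather than maximization.
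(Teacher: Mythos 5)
Your proof is correct and follows essentially the same route as the paper's: both obtain the bi-matrix from the linearity of the perturbed equilibrium (the paper reads it off the table of Theorem~\ref{IS-MP-Thm1}, which itself rests on Lemma~\ref{JustPerturbationsIS-MP}) and then establish dominance by entrywise comparison, implicitly relying on $Y/\pi>0$. Your two side remarks are both well taken: the $(Y/M)$ in the top-right entry is indeed a typo for $(Y/\pi)$ (it propagates into the paper's own proof), and the positivity of $Y/\pi$ is an assumption the paper uses silently that deserves to be stated.
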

  
%%%%%%% begin proof %%%%%%%%%
\begin{proof} The tabular form of $\Gamma(\mathcal U)$ comes straightforward from 
the tabular form appearing in the Theorem \ref{IS-MP-Thm1}. With respect to  dominance, let us compute  best responses. 
The $\ang$ best response to  $\{\overline{Y}\}$, as  $\delta_\ang(\overline{Y})+\delta_\dae(\overline{Y})> \delta_\dae(\overline{Y})-(Y/\pi)\delta_\ang(\pi^*)$,  verifies 
$B_\ang(\{\overline{Y}\})= \{\overline{Y}\}$ and similarly $B_\ang(\{\pi^*\})= \{\overline{Y}\}$. Consider now the  best responses of $\dae$. When $\ang$ chooses 
$\{\overline{Y}\}$, as $\delta_\ang(\overline{Y})+\delta_\dae(\overline{Y})>
\delta_\ang(\overline{Y})-(Y/M)\delta_\dae(\pi^*)$ and $\dae$ tries to minimize the utility, we get $B_\dae(\{\overline{Y}\})= \{\pi^*\}$. Similarly, when $\ang$ chooses $\{\pi^*\}$ we get 
 $B_\dae(\{\pi^*\})= \{\pi^*\}$.
%\qed
\end{proof}

%%%%%%%%%%%%%%%%%%%%%%%%%%%%%%%
%%%%%%%% conclusions and open problems %%%%%%%%
%%%%%%%%%%%%%%%%%%%%%%%%%%%%%%%

\section{Conclusions and Further Developments}
\label{Conclusions}
We have shown how to adapt  the $\ang/\dae$-framework provided in \citep{GSS2014-CJ} to analyse  uncertainty in the IS-LM and the IS-MP models. In both cases, we have studied different possible cases of uncertainty 
through the set of Nash equilibria showing the applicability of the  framework. 
In general, a common way to deal with uncertainty consists on {\em modelling several scenarios} and develop  qualitative likelihood analysis of the different cases.
In our proposed analysis positive and negative aspects are taken into account.

Our approach can be adapted to analyse uncertainty in other financial settings. As an example, consider an option valuation using the Greeks \citep{HullOptions}. We develop now a possible application based on  $\Delta$ where
$\Delta$ is the ratio between the changes in the price of the derivative to the corresponding price of the underlying asset.
We do that considering uncertainty in the price $f$  of a call option over a stock $S$. We use the   
one step binomial tree model \citep{HullOptions} having the following  components:
\[
{\renewcommand{\arraystretch}{1.2}
\begin{tabular}{|| c |c |c | c|c|c||}
\hline
 Stock price & $S$ & Time period & $T$ &Up jump & $u$\\
Strike price & $X$ & Risk-free rate & $r$ & Down jump &$d$\\
\hline
\end{tabular}
}
\]
Where $S$ is the price of the stock,  $X$ is the strike price (the price at the expiration date), 
$T$ is the time period before expiration. The parameters $u$ and $d$ fixed such that, if stock goes up the price (of the stock) should be $S\times u$ and if it goes down the price should be $S\times d$.  
Consider the following valuation taken from \citep{HullOptions} serving us to introduce the basic constructs:
\[
{\renewcommand{\arraystretch}{1.2}
\begin{tabular}{||c  c  c || c   c  c ||}
\hline
$S$ & $X$ & $T$ & $r$ & $u$& $d$\\
\hline
$20$ & $21$ &  $0.25$ &$0.12$  & $1.1$& $0.6$ \\
\hline
\end{tabular}
}
\]
In order to compute $f$, consider a portfolio having a long $\Delta$ shares and one short call option.
If the stock goes up in 3 months ($T= 0.25$)  it holds $S\times u>X$ and  the payoff from the derivative is  $f_u=S\times u -X=1$. The value of the portfolio in the up case is  $S\times u\times \Delta -f_u $.
If the stock goes down, $f_d=0$ the value is  $S\times d\times\Delta-f_d$.  
In order to be risk-free $\Delta$ needs to verify
$S\times u\times \Delta -f_u =  S\times d\times \Delta-f_d$. Therefore 
\[\Delta = \frac{f_u-f_d}{S\times u-S\times d}=\frac{1}{10}\]
and the value of the portfolio at the end of the period will be $S\times d\times \Delta$.
To get the present value $v$ we discount at free-risk rate 
$v=S\times d\times \Delta \times e^{-r\times T}=1.164535$
and  the price $f$ of the call is 
$f=  S\times\Delta \times (1-v)=S\times\Delta \times (1-d\times e^{-r\times T})=0.835465$.
As in a call the values $S$, $X$ and $T$ are  explicitly is written, therefore is no ambiguity about them. Therefore the set of exogenous components is  $\mathcal E= \{ r, u,d\}$. An example of a  perturbation strength model $\pertur$ is 
\[
{\renewcommand{\arraystretch}{1.2}
\begin{tabular}{|| c || c | c | c ||}
\hline
$\agent$ & $r$ & $u$& $d$\\
\hline
%&$20$ & $21$ & $1.1$& $0.6$ & $0.12$&  $0.25$\\
\hline
$\ang$  &$-0.05$ & $+0.4$ & $0$   \\
$\dae$ & $+0.10$ & $0$    & $+0.3$ \\
\hline
\end{tabular}
}
\]
Given the strategy profile $(a,d)$ the new values of the perturbed set are
$\mathcal E'= \{ r', u',d'\}$. Let, consider the case $(a,d)=(\{u\},\{d\})$.
As $r\not\in \{u,d\}$ it holds is $r'=r$.
As $u$ is only selected by the angel, $u'=u+\delta_\ang(u)=1.1+0.4=1.5$.
As $d$ is just selected by the daemon $d'=d+\delta_\dae(d)=0.6+0.3=0.9$.
As $S\times u'=   30$ and $f_{u'}= 30-21=9$. As $S\times d'= 18$ we have $f_{d'}=0$.
As usual we shorten $\Delta(\stress_\pertur[\{u\},\{d\}])$ as $\Delta(\{u\},\{d\})=3/4$ 
%
%\[\Delta(\{u\},\{d\})=\frac{f_{u'}-f_{d'}}{S\times u'-S\times v'}=\frac{3}{4}\]
%
and 
$f(\{u\},\{d\})=S\times\Delta(\{u\},\{d\}) \times (1-d'\times e^{-r'\times T})=1.898985$.
As another case example of interest consider the strategy where both $\ang$ and $\dae$ choose to strength $r$. In this case
that is $(a,d)=(\{r\},\{r\})$. In this case $u'=u=1.1$, $d'=d=00.6$ and $r'=r+\delta_\ang(r)+\delta_\dae(r)=0.12-0.05+0.10=0.17$. As neither $u$ nor $d$ has been perturbed,
$\Delta(\{r\},\{r\})=\Delta = 1/10$ and $f(\{r\},\{r\})=0.8499314$.
Let 
$\mathcal U=\langle \mathcal E,\pertur, \{\{u\} ,\{r\}\}, \{\{d\},\{r\}\}, 1,1,f\rangle$
be an uncertainty profile for a call option. The corresponding $\ang/\dae$ game is:

\smallskip
{\renewcommand{\arraystretch}{1.2}
\begin{tabular}{cc|c|c|}
\cline{3-4}
& &\multicolumn{2}{|c| }{$\dae$}\\
\cline{3-4}
   & & $\{d\}$    & $\{r\}$\\
\cline{1-4}
\multicolumn{1}{|c|}{\multirow{2}{*} {$\ang$}} &
\multicolumn{1}{|c|}{$\{u\}$}   
& $1.898985$ 
& $4.321089$\\
\cline{3-4}
\multicolumn{1}{|c|}{}   &
\multicolumn{1}{|c|}{$\{r\}$}   
& $ 0.5780649$ 
& $0.8499314$\\
\cline{1-4}
\end{tabular}
}

\medskip
\noindent
The only \PNE is $(\{u\},\{d\})$ with $\nu(\mathcal U)=1.898985$.
Therefore, in the uncertain situation described by $\mathcal U$ perhaps the price of the call should be $1.898985$ rather than $0.835465$.

Studying the applicability of the \angdae approach in other macroeconomic settings is of great interest. In particular we are interested in compare our approach to other provided analytical tools. To do so we would need to further analyse  the structure and properties of the Nash equilibria of the associated games.
It is important to  know the possibilities and limits of the uncertainty profiles and $\ang/\dae$ games in macroeconomics. 
As  pointed in \citep{Durlauf2012}, a way to decide with no probabilities is to guard against really bad cases. In such a setting regret analysis  provides a way to analyse such cases. 
Analysing the connection between \angdae and regret analysis   would be interesting. 
In \citep{TheClimateCasino}, different scenarios are developed  in order to analyse different possibilities in relation to climate change. For instance, when considering the cost of meeting global temperature targets, two possible scenarios for  the average-cost/temperature are considered. Those models are not far away from the ones considered in this paper,  thus uncertainty profiles and $\ang/\dae$ could provide a complementary analysis tool. 
%
%Nash equilibria could give us the opportunity to consider all together the ensemble of possible scenarios. 
\citet{Var77} has studied the stability of the IS-LM model. As
$(Y(a,d), r(a,d))$ can be seen as perturbations of the equilibrium point $(Y,r)$ (see Lemmas \ref{linearity-IS-LM}, \ref{EasyFiscalPolicies})
if will be of interest to study the relationship among of both models. In general, it would be interesting to try to apply $\ang/\dae$-framework to non-linear models. Ever if general theorems seems difficult to obtain, accurate numerical examples could be a first step to perform the comparison.  Moreover, in order to transform $\ang/\dae$-framework into a practical tool  it seems necessary to explore and develop the links with macroeconomic policy coordination and international macroeconomic integration.

It is also important to  know the possibilities, limits and weaknesses of the $\ang/\dae$ approach. 
We have considered uncertainty profiles
$\mathcal U=\langle \mathcal E,\pertur,\mathcal A,\mathcal D, 
 b_{\ang}, b_{\dae},  u_{\ang}, u_{\dae}\rangle$
where the spreads $ b_{\ang}$,  $b_{\dae}$ are fixed independently of the other parameters in $\mathcal U$. It is interesting to consider the case where the spreads are related to other parameters (see the first part of the Lemma \ref{LemmaNoFreedom}), for instance  $b_{\ang} = \lfloor \frac{1}{2}\#\mathcal A\rfloor$ or 
 $b_{\dae} = \lfloor \frac{2}{3}\#\mathcal A\rfloor$.
Moreover, any $\mathcal U$ is a tuple
%$\mathcal U=\langle \mathcal E, \pertur, \mathcal A,\mathcal D, f_{\mathcal A}, 
%f_{\mathcal D}, u_{\ang}, u_{\dae}\rangle$
having many parameters to be determined. In principle those parameters are fixed by the analyser based on his perception of the world an historical data. However the danger of  over parametrization exists \citep{HullRisk}. Remind the von Neumann dictum, ``with  four parameters I can fit an elephant and with five I can make him wiggle his trunk'' \citep{DysonElephant}. 
Also possible connections  between uncertainty profiles and  $\ang/\dae$ games and the maxmin expected utility approach proposed by 
\citep{GilboaSchmeidler1989} and developed in \citep{Hansen2006, HansenSargentAlter2006} deserves further study. 

Finally, in this paper we have consider only short-time models. The $\ang/\dae$ approach
for long-time models is left open. A common way to deal with uncertainty along the time consists on {\em modelling the temporal evolution} of several possible scenarios.
As an example, consider the two possible scenarios  in  the
evolution (France, 1820-2100) of the annual value of bequest and gifts \citep{PikettyCapital}.
Scenarios are given in terms of the growth of output $g$ and of the rate of return on capital $r$ (do not confuse with the previous interest rate $r$).
There is a {\em central scenario} with $g= 1.7\%$ and $r=3.0\%$ and an {\em alternative scenario with}   $g= 1.0\%$ and $r=5.0\%$.
In order to tackle such cases,  temporal aspects can be incorporated to  game theory using stochastic games \citep{ShapleyStochastic}. 
In \citep{CastroGSS15} a  time dimension of the $\ang/\dae$ approach was integrated into the frame of stochastic automata. Perhaps this approach, based on stochastic automata,  could be adapted to deal with the uncertainty along the time of some economic models.

%%%%%%%%%%%% ack %%%%%%%%%%%%

%\section{Related work}
%\label{sec-related}

%\section{Conclusions}
%\label{sec-conclusion}

\section*{Acknowledgements}
We would like to thank the anonymous referees for their  careful reading and for their fruitful suggestions that help us to improve the paper. 
J. Gabarro and M. Serna are
partially supported by
funds from the Spanish Ministry for Economy and Competitiveness (MINECO) and the European Union (FEDER funds) under grant TIN2013-46181-C2-1-R (COMMAS)
and from AGAUR, Generalitat de Catalunya under grant SGR 2014:1034 (ALBCOM).

%%%%%%%%%%% biblio %%%%%%%%
%\bibliographystyle{apa-good}
%\bibliography{IS-LM-Biblio}

%\section*{Appendx}

%\def\notesname{Note}
%
%\theendnotes

%\section*{Query}
%
%\tc{AQ1: AUTHOR PLEASE CITE FIGURE 8 IN TEXT.}

\end{document}